\newtheorem{lem}{Lemma}
\newtheorem{pro}{Proposition}
\newtheorem{rmk}{Remark}
\newcommand*\rc{\color[rgb]{0, 0, 0}}
\newcommand{\Q}{{\vphantom{-1}}}
 \newcounter{subeqn}
\begin{document}
\title{Directional Modulation via Symbol-Level Precoding: \\ A Way to Enhance Security}
\author{
    \IEEEauthorblockN{Ashkan Kalantari, Mojtaba Soltanalian, Sina Maleki, Symeon Chatzinotas, \\ and Bj\"{o}rn Ottersten,~\IEEEmembership{Fellow,~IEEE}}
    \thanks{\footnotesize This work was supported by the National Research Fund (FNR) of Luxembourg under AFR grant  for the project ``Physical Layer Security in Satellite Communications (ref. 5798109)'', SeMIGod, and SATSENT. Ashkan Kalantari, Sina Maleki, Symeon Chatzinotas, and Bj\"{o}rn Ottersten are with the Interdisciplinary Centre for Security, Reliability and Trust (SnT), The University of Luxembourg, 4 rue Alphonse Weicker, L-2721 Luxembourg-Kirchberg, Luxembourg, ({E-mails: \{ashkan.kalantari,sina.maleki,symeon.chatzinotas,bjorn.ottersten\}@uni.lu}). M. Soltanalian is with the Department of Electrical and Computer Engineering, University of Illinois at Chicago, Chicago, IL 60607, E-mail: (msol@uic.edu). A part of this work was presented at the IEEE International Conference on Acoustics, Speech and Signal Processing (ICASSP) 2016~\cite{Kalantari:DM:ICASSP:2016}.}     
}
\maketitle
\begin{abstract}
Wireless communication provides a wide coverage at the cost of exposing information to unintended users. As an information-theoretic paradigm, secrecy rate derives bounds for secure transmission when the channel to the eavesdropper is known. However, such bounds are shown to be restrictive in practice and may require exploitation of specialized coding schemes. In this paper, we employ the concept of directional modulation and follow a signal processing approach to enhance the security of multi-user MIMO communication systems when a multi-antenna eavesdropper is present. Enhancing the security is accomplished by increasing the symbol error rate at the eavesdropper. Unlike the information-theoretic secrecy rate paradigm, we assume that the legitimate transmitter is not aware of its channel to the eavesdropper, which is a more realistic assumption. {\rc We examine the applicability of MIMO receiving algorithms at the eavesdropper.} Using the channel knowledge and the intended symbols for the users, we design security enhancing symbol-level precoders for different transmitter and eavesdropper antenna configurations. We transform each design problem to a linearly constrained quadratic program and propose two solutions, namely the iterative algorithm and one based on non-negative least squares, at each scenario for a computationally-efficient modulation. 
Simulation results verify the analysis and show that the designed precoders outperform the benchmark scheme in terms of both power efficiency and security enhancement.                 
\end{abstract}
\begin{keywords}
Array processing, directional modulation, $\mathit{M}$-PSK modulation, physical layer security, symbol-level precoding.                    
\end{keywords}
\section{Introduction} \label{sec:intro}
\subsection{Motivation}
Wireless communications allows information flow through broadcasting; however, unintended receivers may also receive these information, with eavesdroppers amongst them. To derive a bound for secure transmission, Wyner proposed the secrecy rate concept in his seminal paper~\cite{Wyner:1975} for discrete memoryless channels. The secrecy rate defines the bound for secure transmission and proper coding is being developed to achieve this bound~\cite{Baldi:2014}. However, the secrecy rate can restrict the communication system in some aspects. Primarily, the secrecy rate requires perfect or statistical knowledge of the eavesdropper's channel state information (CSI)~\cite{Wyner:1975,Tekin:2008,Kalantari:2015:TWC,Kalantari:2015}, however, it may not be possible to acquire the perfect or statistical CSI of a passive eavesdropper in practice. In addition, in the secrecy rate approach, the transmission rate has to be lower than the achievable rate, which may conflict with the increasing rate demands in wireless communications. Furthermore, the transmit signal usually is required to follow a Gaussian distribution which is not the case in current digital communication systems. 

Recently, there has been a growing research interest on directional modulation technology and its security enhancing ability. As a pioneer,~\cite{Babakhani:2008} implements a directional modulation transmitter using parasitic antenna. This system creates the desired amplitude and phase in a specific direction by varying the length of the reflector antennas for each symbol while scrambling the symbols in other directions. The authors of~\cite{Daly:2009} suggest using a phased array at the transmitter and employ the genetic algorithm to derive the phase values of a phased array in order to create symbols in a specific direction. The directional modulation concept is later extended to directionally modulating symbols to more than one destination. In~\cite{Ding:2015:MIMO}, the singular value decomposition (SVD) is used to directionally modulate symbols in a two user system. The authors of~\cite{Yuan:2015:orth:mul:bm} derive the array weights to create two orthogonal far field patterns to directionally modulate two symbols to two different locations and~\cite{Hafez:2015} uses least-norm to derive the array weights and directionally modulate symbols towards multiple destinations in a multi-user multi-input multi-output (MIMO) system. The authors in~\cite{Kalantari:DM:ICASSP:2016} design the array weights of a directional modulation transmitter in a MIMO system to minimize the power consumption while keeping the signal-to-noise ratio (SNR) of each received signal above a specific level. The directional modulation literature focuses on practical implementation and the security enhancing characteristics of this technology. On top of the works in the directional modulation literature where antennas excitation weights change on a symbol basis, the symbol-level precoding to create constructive interference between the transmitted symbols has been developed in~\cite{cons:2015,Masouros:2015,con:glob:2015,Multicast:2016,Energy-Efficient:CI:2016} by focusing on the digital processing of the signal before being fed to the antenna array. The main difference between the directional modulation and the digital symbol-level precoding for constructive interference is that the former focuses on applying array weights in the analog domain such that the received signals on the receiving antennas have the desired amplitude and phase, whereas the latter uses symbol-level precoding for digital signal design at the transmitter to create constructive interference at the receiver. Furthermore, directional modulation was originally motivated by physical layer security, whereas symbol-level precoding by energy efficiency.
\subsection{Contributions}
\label{subsec:con}
In this paper, we study and design the optimal precoder for a directional modulation transmitter in order to enhance the security in a quasi-static fading MIMO channel where a multi-antenna eavesdropper is present. Here, enhancing the security means increasing the symbol error rate (SER) at the eavesdropper. In directional modulation, users' MIMO channel and symbols meant for the users are used to design the precoder. The precoder is designed to induce the symbols on the receiver antennas rather than generating the symbols at the transmitter and sending them, which is the case in the conventional transmit precoding~\cite{Lai-U:2004,Spencer:2004}. In other words, in the directional modulation, the modulation happens in the radio frequency (RF) level while the arrays' emitted signals pass through the wireless channel. This way, we simultaneously communicate multiple interference-free symbols to multiple users. Also, the precoder is designed such that the receiver antennas can directly recover the symbols without CSI knowledge and equalization. Therefore, assuming the eavesdropper has a different channel compared to the users, it receives scrambled symbols. In fact, the channels between the transmitter and users act as secret keys~\cite{Kui2011:sec:key} in the directional modulation. Furthermore, since the precoder depends on the symbols, the eavesdropper cannot calculate it. In contrast to the information theoretic secrecy rate paradigm, the directional modulation enhances the security by considering more practical assumptions. Particularly, directional modulation does not require the eavesdropper's CSI to enhance the security; in addition, it does not reduce the transmission rate and signals are allowed to follow a non-Gaussian distribution. In light of the above, our contributions in this paper can be summarized as follows:
\begin{enumerate}
	\item We design the optimal symbol-level precoder for a security enhancing directional modulation transmitter in a MIMO fading channel to communicate with arbitrary number of users through symbol streams. In addition, we derive the necessary condition for the existence of the precoder{\rc , which is novel compared to the digital symbol-level precoding works in~\cite{cons:2015,Masouros:2015,con:glob:2015,Multicast:2016,Energy-Efficient:CI:2016}.} The directional modulation literature mostly includes LoS analysis with one or limited number of users, and multi-user works do not {\rc design the optimal precoder to communicate symbols with arbitrary multi-antenna users from a power efficiency point of view.} 
	
	\item We {\rc analyze the applicability of various MIMO receiving algorithms at the eavesdropper. Since the imposed SER on the eavesdropper depends on the difference between the number of transmitter and the eavesdropper antennas, we consider the cases when the eavesdropper has less or more antennas than the transmitter and design a specific precoder for each case.} We minimize the transmission power for the former case and maximize the SER at the eavesdropper for the latter case to prevent or suppress successful decoding at the eavesdropper. This is done while keeping the SNR of users' received signals above a predefined threshold and thus the users' rate demands are satisfied. {\rc The analysis of different MIMO receiving algorithms at the eavesdropper and designing a precoder to maximize the SER at the eavesdropper are absent in the available directional modulation literature and digital symbol-level precoding works~\cite{cons:2015,Masouros:2015,con:glob:2015,Multicast:2016,Energy-Efficient:CI:2016}.} 
	
	\item We show that {\rc the SER imposed on the eavesdropper in the conventional precoding depends on the difference between the number of antennas of the eavesdropper and the receiver}. In our design, {\rc the SER imposed on the eavesdropper depends on the difference between the number of eavesdropper and transmitter antennas} since the precoder depends on both the channels and symbols. The transmitter, e.g., a base station, probably has more antennas than the receiver, hence, it is more likely to preserve the security in directional modulation, especially in a massive MIMO system.    
	
	\item We simplify the power and SNR minimization precoder design problems into a linearly-constrained quadratic programming problem. For faster design, we introduce new auxiliary variable to transform the constraint into equality and propose two different ways to solve the design problems. In the first way, we use the penalty method to get an unconstrained problem and solve it by proposing an iterative algorithm. Also, we prove that the algorithm converges to the optimal point. In the second one, we use the constraint to get a non-negative least squares design problem. For the latter, there are already fast techniques to solve the problem.
\end{enumerate}
\subsection{Additional Related Works to Directional Modulation}
Array switching at the symbol rate is used in~\cite{Baghdady:1990,Daly:patt:2010} to induce the desired symbols. 
In connection with~\cite{Babakhani:2008},~\cite{Lavaei:2010} studies the far field area coverage of a parasitic antenna and shows that it is a convex region. 
The technique of~\cite{Daly:2009} is implemented in~\cite{Daly:2010} using a four element microstrip patch array where symbols are directionally modulated for $\mathit{Q}$-PSK modulation. The authors of~\cite{Daly:2011} propose an iterative nonlinear optimization approach to design the array weights which minimizes the distance between the desired and the directly modulated symbols in a specific direction. 
The Fourier transform is used in~\cite{Yuan:2013:fourier,Yuan:2014:con} to create the optimal constellation pattern for $\mathit{Q}$-PSK directional modulation. In~\cite{Yuan:2014,Yuan:2014:patt:sep,Yuan:2015,Ding:2015:MIMO} directional modulation is employed along with noise injection. The authors of~\cite{Yuan:2014,Yuan:2014:patt:sep} utilize an orthogonal vector approach to derive the array weights in order to directly modulate the data and inject the artificial noise in the direction of the eavesdropper. The work of~\cite{Yuan:2014} is extended to retroactive arrays\footnote{A retroactive antenna can retransmit a reference signal back along the path which it was incident despite the presence of spatial and/or temporal variations in the propagation path.} in~\cite{Yuan:2015} for a multi-path environment. An algorithm including exhaustive search is used in~\cite{Hongzhe:2014} to adjust two-bit phase shifters for directionally modulating information. 

\subsection{Organization}
The remainder of the paper is organized as follows. In Section~\ref{sec:sys:mod}, transmitter architectures, network configuration, and the signal model are introduced. The security of the directional modulation is studied in Section~\ref{sec:dm:sec}. {\rc In Section~\ref{sec:pre:des},} the optimal precoders for the directional modulation are designed {\rc and the benchmark scheme is mentioned}. {\rc The complexity of our scheme and the benchmark method are studied in Section~\ref{sec:complex}.} In Section~\ref{sec:sim}, we present the simulation results. Finally, the conclusions are drawn in Section~\ref{sec:con}. 

\emph{Notation}: Upper-case and lower-case bold-faced letters are used to denote matrices and column vectors, respectively. The superscripts $(\cdot)^T$, $(\cdot)^*$, $(\cdot)^H$, and ${\left( \cdot   \right)^\dag }$ represent transpose, conjugate, Hermitian, and Moore-Penrose pseudo inverse operators, respectively. ${{\bf{I}}_{N \times N}}$ denotes an $N$ by $N$ identity matrix, $\rm diag(\bf{a})$ denotes a diagonal matrix where the elements of the vector $\bf{a}$ are its diagonal entries, ${\bf{a}} \circ {\bf{b}}$ is the element-wise Hadamard product, $\bf{a}_+$ denotes a vector where negative elements of the vector $\bf{a}$ are replaced by zero, $\bf{0}$ is the all zero vector, $\|\cdot\|$ is the Frobenius norm, and $|\cdot|$ represents the absolute value of a scalar. ${\mathop{\rm Re}\nolimits} \left(  \cdot  \right)$, ${\mathop{\rm Im}\nolimits} \left(  \cdot  \right)$, and $\arg \left(  \cdot  \right)$ represent the real valued part, imaginary valued part, and angle of a complex number, respectively.
\section{Signal and System Model} \label{sec:sys:mod}
\begin{figure}[]
	\centering
	\includegraphics[width=9cm]{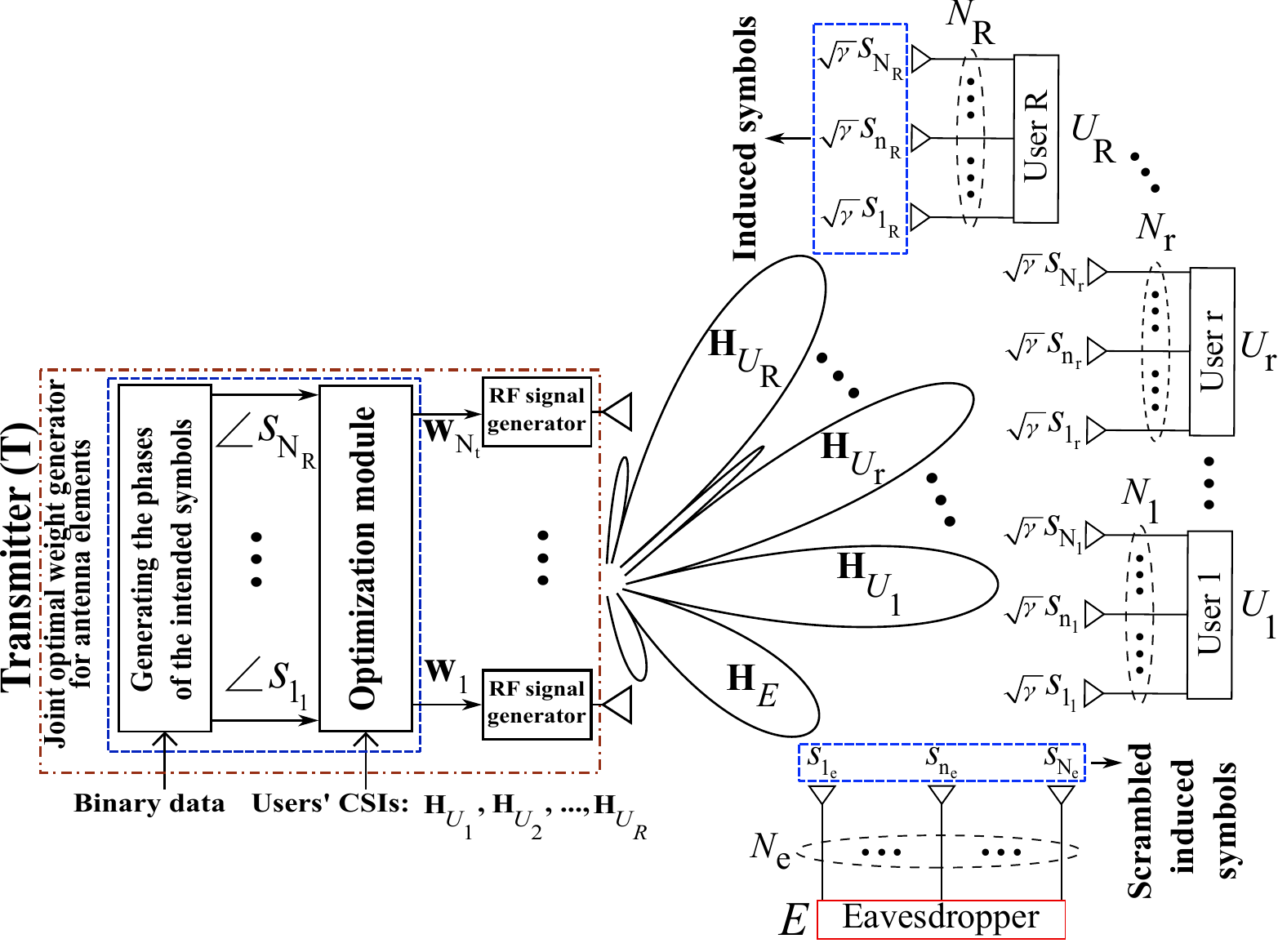}
	\caption{Generic architecture of a directional modulation transmitter, including the optimal security enhancing antenna weight generator using the proposed algorithms.}
	\label{fig:gen:sys}
\end{figure}
\begin{figure}[]
	\centering
	\includegraphics[width=6cm]{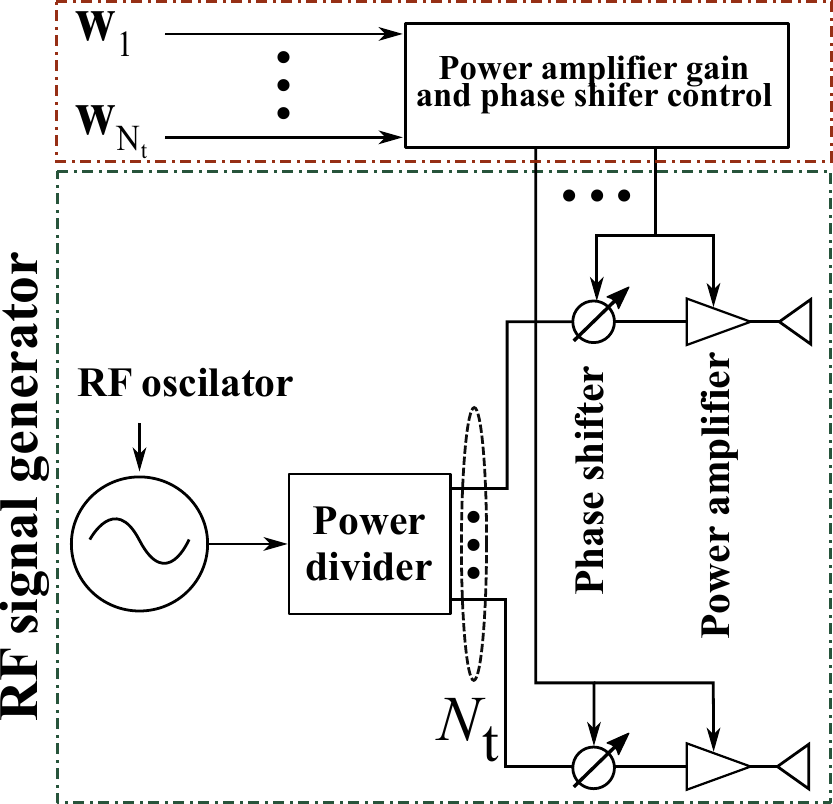}
	\caption{RF signal generation using actively driven elements, including power amplifiers and phase shifters.}
	\label{fig:sys:mod:act:ele}
\end{figure}
\begin{figure}[]
	\centering
	\includegraphics[width=6cm]{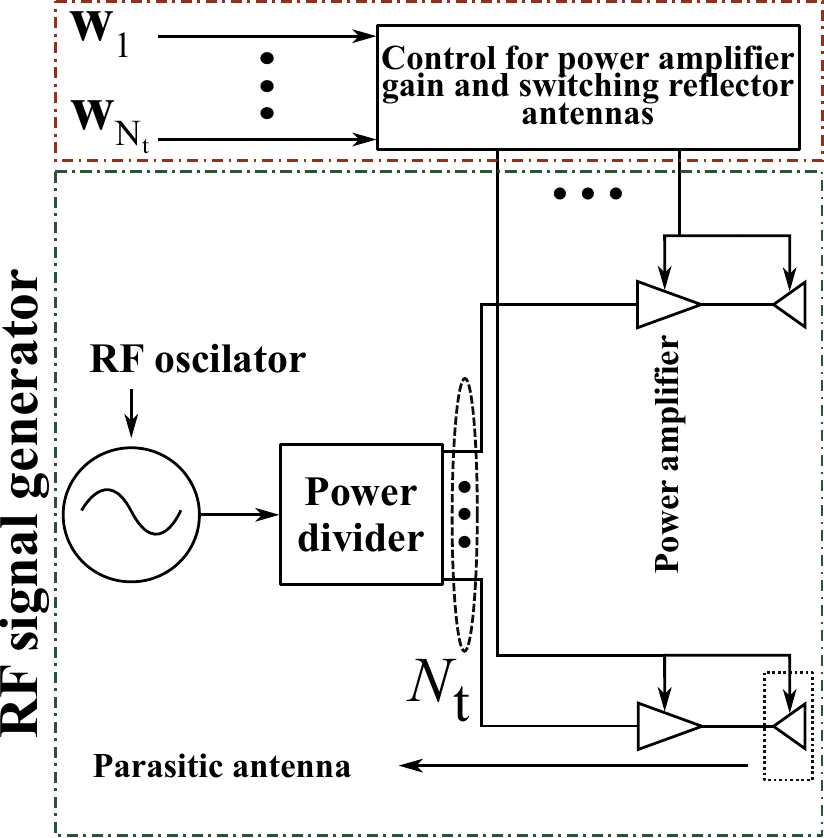}
	\caption{RF signal generation using power amplifiers and parasitic antennas.}
	\label{fig:sys:mod:par}
\end{figure}
We consider a communication network with a multi-antenna transmitter denoted by $T$, $R$ multi-antenna users denoted by $U_r$ for $r=1,...,R$ where the $r$-th user has $N_r$ antennas, and a multi-antenna eavesdropper\footnote{The same system model and solution holds for multiple colluding single-antenna eavesdroppers.} denoted by $E$ with $N_e$ antennas, as shown in Fig.~\ref{fig:gen:sys}. In addition, all the communication channels are considered to be quasi-static block fading. Two possible architectures for the RF signal generator block of Fig.~\ref{fig:gen:sys} are presented in Figures.~\ref{fig:sys:mod:act:ele} and~\ref{fig:sys:mod:par}. In Fig.~\ref{fig:sys:mod:act:ele}, power amplifiers and phase shifters are used in each RF chain to adjust the gain and the phase of the transmitted signal from each antenna. In Fig.~\ref{fig:sys:mod:par}, we adapt the technique of~\cite{Babakhani:2008} to adjust the phase using parasitic antennas in each RF chain. A parasitic antenna is comprised of a dipole antenna and multiple reflector antennas. Near field interactions between the dipole and reflector antennas creates the desired amplitude and phase in the far filed, which can be adjusted by switching the proper MOSFETs. When using parasitic antennas, the channel from each parasitic antenna to the far field needs to be LoS, and we need to acquire the CSI of the fading channel from the far field of each parasitic antenna to the receiving antennas. For simplicity, we only consider the amplitude and phase of the received signals and drop ${e^{j2\pi f t}}$, which is the carrier frequency part.  

After applying the optimal coefficients to array elements, 
the received signals by $U_r$ and $E$ are
\begin{align}
&{{\bf{y}}_{{U_r}}^\Q} = {\bf{H}}_{U_r}{\bf{w}} + {{\bf{n}}_{U_r}^\Q}, \,\, r=1,...,R
\label{eqn:eve:rec:U}
\\
&{{\bf{y}}_{E}^\Q} = {\bf{H}}_E{\bf{w}} + {{\bf{n}}_E^\Q}, 
\label{eqn:eve:rec:E}
\end{align}
where the signal ${{\bf{y}}_{U_r}^\Q}$ is an $N_r \times 1$ vector denoting the received signals by $U_r$, ${{\bf{y}}_E^\Q}$ is an $N_e \times 1$ vector denoting the received signals by $E$, ${{\bf{H}}_{{U_r}}} = {\left[ {{{\bf{h}}_{1_r}},...,{{\bf{h}}_{n_r}},...,{{\bf{h}}_{N_r}}} \right]^T}$ is an $N_r \times N_t$ matrix denoting the channel from $T$ to $U_r$, ${{\bf{h}}_{n_r}}$ is an $N_t \times 1$ vector containing the channel coefficients from the transmitter antennas to the $n$-th antenna of the $r$-th user, the channel for all users is {\rc an $N_U \times N_t$ matrix defined as} ${{\bf{H}}_U} = \left[ {{\bf{H}}_{{U_1}},...,{\bf{H}}_{{U_r}},...,{\bf{H}}_{{U_R}}} \right]^T$, ${{\bf{H}}_E}$ is an $N_e \times N_t$ matrix denoting the channel from $T$ to $E$, and $\bf{w}$ {\rc denotes the transmit precoding vector.} In directional modulation, the elements of ${{\bf{H}}_{{U_r}}}{\bf{w}} = {\left[ {\sqrt \gamma {s_{{1_r}}},...,\sqrt \gamma  {s_{{n_r}}},...,\sqrt \gamma  {s_{{N_r}}}} \right]^T}$ are the induced $\mathit{M}$-PSK symbols on the antennas of the $r$-th user, ${s_{n_r}}$ is the induced $\mathit{M}$-PSK symbol on the $n$-th antenna of the $r$-th user with instantaneous unit energy, i.e., ${\left| {s_{n_r}} \right|^2} = 1$, $\gamma$ is the SNR of the induced symbol{\rc , and $\mathit{M}$ is the $\mathit{M}$-PSK modulation order.} To detect the received symbols, $U_r$ can apply conventional detectors on each antenna. The random variables ${{\bf{n}}_{U_r}^\Q}$ and ${{\bf{n}}_E^\Q}$ denote the additive white Gaussian noise at $U_r$ and $E$, respectively. The Gaussian random variables ${{\bf{n}}_{U_r}^\Q}$ and ${{\bf{n}}_E^\Q}$ are independent and identically distributed (i.i.d.) with ${{\bf{n}}_{U_r}^\Q}\sim\mathcal{CN}({\bf{0}},  \sigma _{{n_{U_r}}}^2  {{\bf{I}}_{N_r \times N_r}} )$, and ${{\bf{n}}_E^\Q}\sim\mathcal{CN}({\bf{0}}, \sigma _{{n_E}}^2  {{\bf{I}}_{N_e \times N_e}} )$, respectively, where $\mathcal{CN}$ denotes a complex and circularly symmetric random variable. 

Throughout the paper, we assume that $T$ knows only ${{\bf{H}}_U}$ while $E$ knows both ${{\bf{H}}_U}$ and ${{\bf{H}}_E}$. In the following, we analyze the conditions under which we can enhance the system security. 
\section{Security analysis of directional modulation}   \label{sec:dm:sec}
In this section, we discuss {\rc different MIMO receiving algorithms and investigate whether $E$ can use them to estimate the received signals by the users or not}. We assume that $E$'s channel is independent from those of the users, and to consider the worst case, we assume that ${\bf{H}}_E$ is full rank. Hence, the element numbers of ${\bf{H}}_E {\bf{w}}$, i.e., received signals on $E$'s antennas, are different from those of ${\bf{H}}_{U_r}{\bf{w}}$, i.e., received signals on receiver antennas, for $r=1,...,R$. Since $\bf{w}$ depends on the symbols, $E$ cannot {\rc directly} calculate it. {\rc In the following, we analyze the capability of $E$ in using MIMO receiving algorithms to estimate $\bf{w}$.} {\rc
\subsection{Zero-Forcing Estimation}}   \label{subsec:zf}
As an approach to estimate $\bf{w}$, $E$ can remove ${\bf{H}}_E$ {\rc through zero-forcing (ZF) estimation}, and then multiply the estimated $\bf{w}$ by ${\bf{H}}_U$ to estimate the symbols. For $N_e < N_t$, $E$ cannot estimate ${\bf{H}}_{U}{\bf{w}}$ since ${\bf{H}}_E^\dag {{\bf{H}}_E} \ne {\bf{I}}$. However, when $N_e \ge N_t$, $E$ can estimate $\bf{w}$ as follows{\rc
\begin{align}
\widehat {\bf{w}} = {{\bf{G}}_1}{{\bf{y}}_E^\Q} = {\bf{w}} + {{\bf{G}}_1}{{\bf{n}}_E^\Q},
\label{eqn:eve:zf}
\end{align}
where 
\begin{align}
{{\bf{G}}_1} = {\left( {{\bf{H}}_E^H{{\bf{H}}_E}} \right)^{ - 1}}{\bf{H}}_E^H,
\label{eqn:eve:G1}
\end{align}}%
and ${\widehat {\bf{w}}}$ is the estimated $\bf{w}$ at $E$. Next, $E$ can multiply ${\widehat {\bf{w}}}$ by ${\bf{H}}_U$ to estimate the signals at receiver antennas, ${\bf{H}}_U \widehat {\bf{w}}$, as
\begin{align}
{\bf{H}}_U \widehat {\bf{w}} = {\bf{H}}_U{\bf{w}} + {\bf{H}}_U{\left( {{\bf{H}}_E^H{{\bf{H}}_E}} \right)^{ - 1}}{\bf{H}}_E^H{{\bf{n}}_E^\Q}.
\label{eqn:eve:mim}
\end{align}
Through~\eqref{eqn:eve:zf} to~\eqref{eqn:eve:mim}, $E$ virtually puts itself in the location of the users to estimate the received signal by them. {\rc The eavesdropper is capable of doing this} since we assume that it knows the users' channels, ${{\bf{H}}_U}$. This way, $E$ gets access to the secret key, which allows for observing the signals from users' point of view; however, the required process increases the noise at $E$. 
{\rc \subsection{Minimum Mean-Square Error Estimation}   \label{subsec:MMSE}
To avoid enhanced noise, $E$ can estimate $\bf{w}$ via the minimum mean-square error (MMSE) technique. The estimated symbols at $E$ through MMSE can be written as~\cite{Kay:est:I} 
\begin{align}
&{\bf{\widehat{w}}} = {{\bf{G}}_2}{{\bf{y}}_E^\Q},
\label{eqn:mmse:est}
\\
&{{\bf{H}}_U}\widehat {\bf{w}} = {{\bf{H}}_U}{{\bf{G}}_2}{{\bf{H}}_E}{\bf{w}} + {{\bf{H}}_U}{{\bf{G}}_2}{{\bf{n}}_E^\Q},
\label{eqn:mmse:est:sym}
\end{align}
with
\begin{align}
{{\bf{G}}_2} = {\left( {{\bf{H}}_E^H{\bf{C}}_{\bf{w}}^{ - 1}{{\bf{H}}_E} + {\bf{C}}_{{{\bf{N}}_E}}^{ - 1}} \right)^{ - 1}}{\bf{H}}_E^H{\bf{C}}_{\bf{w}}^{ - 1},
\label{eqn:mmse:mat}
\end{align}
where $\bf{C}_W$ is the covariance matrix of the precoding vector, $\bf{w}$, and ${{\bf{C}}_{{\bf{N}}_E}}$ is the covariance matrix of the eavesdropper noise, ${\bf{n}}_E$. As we see in~\eqref{eqn:mmse:mat}, the MMSE estimation of $\bf{w}$ at the eavesdropper requires the knowledge of ${{\bf{C}}_{\bf{W}}}$. As an approach to derive ${{\bf{C}}_{\bf{W}}}$, the eavesdropper can design $\bf{w}$ for different random sequences of $\bf{s}$ and channel realizations to derive multiple instantaneous covariance matrices as $\left( {{\bf{w}} - \overline {\bf{w}} } \right){\left( {{\bf{w}} - \overline {\bf{w}} } \right)^H}$, where ${\overline {\bf{w}} }$ is the average of $\bf{w}$. Then, $E$ can average over these instantaneous covariance matrices to calculate $\bf{C}_W$. The eavesdropper can apply the MMSE estimation approach as long as the matrix ${{\bf{H}}_E^H{\bf{C}}_{\bf{w}}^{ - 1}{{\bf{H}}_E} + {\bf{C}}_{{{\bf{N}}_E}}^{ - 1}}$ is non-singular.
\subsection{Successive Interference Cancellation and Sphere Decoding}   \label{subsec:SIC:sph:dec}
The observed signal by the eavesdropper in a conventional MIMO system is 
\begin{align}
{{\bf{y}}_E} = {{\bf{H}}_E}{\bf{W}} \bf{s} + {{\bf{n}}_E},
\label{eqn:con:MIMO}
\end{align}
where the precoding vector $\bf{W}$ depends only on the channel. The eavesdropper needs to estimate the symbol vector, $\bf{s}$, in~\eqref{eqn:con:MIMO} where its elements are drawn from a finite-alphabet set. When the successive interference cancellation (SIC) receiver is applied to a conventional MIMO receiver, each element of $\bf{s}$ is detected and reduced from the aggregated signal. This is possible since $\bf{s}$ is drawn from a finite-alphabet set~\cite{V_blast_1998}. However, in our case, the eavesdropper needs to estimate the precoding vector $\bf{w}$ whose elements take continuous values. Hence, the successive interference cancellation techniques, e.g., ZF-SIC and MMSE-SIC, cannot be applied at the eavesdropper. Furthermore, the similar argument can be followed for the sphere decoding technique~\cite{sphere:1999}, which is based on creating a sphere around the received symbol and finding the closet member of the finite-alphabet set to it.

{\rc Note that $E$ needs to estimate $\bf{w}$ whether it wants to estimate the symbols of a specific user or all the users.}

We will see in Section~\ref{sec:sim} that as the difference between $N_t$ ad $N_e$ goes higher, the imposed SER at $E$ for both ZF and MMSE estimators increases.}
\begin{rmk}
Using a large-scale array transmitter, it is more probable {\rc to have a higher difference between $N_t$ and $N_e$}. Hence, the directional modulation technique seems to be a good candidate to enhance the security when the transmitter is equipped with a large-scale array. \hfill $\blacksquare$ \hfill
\end{rmk}
{\rc
\subsection{Brute-force and maximum likelihood Approach}   
\label{subsec:bru:for}
Apart from the previous estimation approaches, the eavesdropper can follow the brute-force approach and consider all the possible symbol combinations. For a specific modulation order and total number of users' antenna, the symbol vector, $\bf{s}$, has $\mathit{M}^{N_U}$ different possibilities. This means that the eavesdropper needs to solve the design problems~\eqref{eqn:opt:pow:1},~\eqref{eqn:opt:relax:2}, or~\eqref{eqn:opt:Hw:1} $\mathit{M}^{N_U}$ times to make a look up table. Furthermore, note that the eavesdropper needs to recalculate the entire look up table if any element in ${\bf{H}}_U$ or ${\bf{H}}_E$ changes. Depending on the coherence time of the channel, this increases the computational complexity at the eavesdropper. If we assume the ideal case without noise, the eavesdropper needs to search in its look up table for ${\bf{y}}_E^{}$ to find the corresponding vector $\bf{w}$. 

Nevertheless, we have noise in practice. This requires $E$ to compare its received signal with all the computed $\mathit{M}^{N_U}$ possible cases of ${\bf{y}}_E^{}$ to find the corresponding precoding vector $\bf{w}$. As we see, the possibilities increase exponentially with $M$ and $N_U$. If we show the calculated possible cases of ${\bf{w}}$ as the set $w = \left\{ {{{\bf{w}}_1},...,{{\bf{w}}_{{M^{{N_U}}}^{}}}} \right\}$ where the cardinality of $w$ is $M^{N_U}$, the eavesdropper can follow the maximum likelihood approach to find $\bf{w}$ as 
\begin{align}
{\hat{\bf{w}}} = \arg \mathop {\min }\limits_{{{\bf{w}}_i} \in w} \,\,\,\left\| {{{\bf{y}}_E} - {{\bf{H}}_E}{{\bf{w}}_i}} \right\|_2,
\label{eqn:bru:for}
\end{align}
where ${\hat{\bf{w}}}$ is the brute-force solution. The complexity of calculating the norm of the difference of two vectors with the length $N_e$ is
\begin{align}
c_{norm} &= 2{{\bf{N}}_e}O\left( n \right) + {{\bf{N}}_e}\left( {2O\left( {{n^{1.465}}} \right) + O\left( n \right)} \right) + {{\bf{N}}_e}O\left( n \right)
\nonumber\\
&= 4 {{\bf{N}}_e}O(n) + {2{{\bf{N}}_e}O\left( {{n^{1.465}}} \right)}.
\label{eqn:norm:com}
\end{align}
Considering that the eavesdropper needs to try all the elements of the set $w$, the total complexity of the brute-force approach is given by $c_{brute-froce}=M^{N_U} (c_{norm} + c_{design})$, where $c_{design}$ is the complexity of solving~\eqref{eqn:opt:itr:u},~\eqref{eqn:opt:relax:4}, or~\eqref{eqn:opt:Hw:7}, which is quantitatively mentioned in~\eqref{eqn:com:comp:ip} and~\eqref{eqn:com:comp:fpg}. The brute-force complexity increases exponentially both in modulation order and total number of receiving antennas. To further understand the amount of computational complexity of the brute-force method, we compare it with the advanced encryption security (AES) method in the following example. For $M=32$ and $N_U=52$, the computational complexity of the brute-force method is $2^{260}(c_{norm} + c_{design})$. The complexity of the improved biclique attack to break the largest key of the AES, which has $256$ bit size, is $2^{254.27}$~\cite{biclique:attack}, which is significantly lower than the complexity of the brute-force method at the eavesdropper for the mentioned example. Computation time of the brute-force method with respect to system dimension is presented in Section~\ref{sec:sim}.

According to this section, we see that the optimal strategy at $E$ is the brute-force and maximum likelihood approaches. However, we see that this comes with an extremely large computational cost.}

\begin{rmk}
Assuming that the legitimate channel is reciprocal, the users can transmit pilots to $T$ so it can estimate ${{\bf{H}}_U}$. This way, we avoid the additional downlink channel estimation and the users do not have to send feedback bits to $T$, hence, $E$ cannot estimate ${{\bf{H}}_U}$. Assuming that $E$ knows the channel from $T$ to itself, i.e., ${{\bf{H}}_E}$, it can estimate $\bf{w}$ as in {\rc~\eqref{eqn:eve:zf} or~\eqref{eqn:mmse:est}}, but it cannot perform~\eqref{eqn:eve:mim} {\rc or~\eqref{eqn:mmse:est:sym}} to estimate the received signals on the receiver antennas.             \hfill $\blacksquare$ \hfill	
\end{rmk} 
\vspace{3pt}

In the next section, optimal symbol-level precoders for the directional modulation are designed to enhance the security.
\section{Optimal Precoder Design for Directional Modulation}
\label{sec:pre:des}
In this section, we define the underlaying problems to design the security enhancing symbol-level precoder for the directional modulation. {\rc Since the SER at $E$ depends on the difference between $N_t$ and $N_e$, we consider the cases $N_e < N_t$ and $N_e \ge N_t$ and design a specific precoder for each of them. The case $N_e < N_t$ focuses on energy efficiency, hence, we also perform relaxed phase analysis for this case.} 
\subsection{The Case of Strong Transmitter ($N_e < N_t${\rc , Fixed Phase)}} 
\label{subsec:lar:tx}
In wireless transmission, adaptive coding and modulation (ACM) is used to enhance the link performance and the channel capacity. In ACM, the transmission power, coding rate, and the modulation order is set according to the channel signal to noise ratio (SNR)~\cite{Goldsmith:1998}. Based on this, we preserve the SNR of the induced symbol on the receiver antenna above or equal to a specific level to successfully decode it. Here, we only focus on the SNR of an uncoded signal since considering SNR of a coded transmission based on ACM is beyond the scope of this paper.  

To avoid a non-convex design problem, we use the required signal properties at the receiver to formulate a convex design problem. In our design, a specific fixed phase is required for the received signal at each receiver antenna. {\rc Since the phase of the received signal at each receiving antenna, ${{\bf{h}}_{{n_r}}^T{\bf{w}}}$, is the same as the phase of the intended symbol, $s_{n_r}$}, if the required SNR{\rc, $\gamma$, of the received signal} increases, the in-phase{\rc, ${\mathop{\rm Re}\nolimits} \left( {{\bf{h}}_{{n_r}}^T{\bf{w}}} \right)$,} and quadrature-phase{\rc, ${\mathop{\rm Im}\nolimits} \left( {{\bf{h}}_{{n_r}}^T{\bf{w}}} \right)$,} parts {\rc will increase} in the same proportion {\rc to satisfy the required SNR}. Since the received signal by each antenna is complex {\rc valued}, we separately consider amplitudes of the in-phase and quadrature-phase parts of {\rc the received signal} on the receiver antenna instead of {\rc its power}. If we show the real and imaginary valued parts of $s_{n_r}$ as ${\rm{Re}}\left( {{s_{n_r}}} \right)$ and ${\rm{Im}}\left( {{s_{n_r}}} \right)$, the required in-phase and quadrature-phase thresholds {\rc of the received signal} are defined as 
\begin{align}
\sqrt \gamma  {\mathop{\rm Re}\nolimits} \left( {{s_{n_r}}} \right), \,\,   \sqrt \gamma  {\mathop{\rm Im}\nolimits} \left( {{s_{n_r}}} \right).
\label{eqn:rl:im}
\end{align}
Since ${\left| {{s_{{n_r}}}} \right|^2} = 1$, we can see that ${\gamma } = {\gamma }{\mathop{\rm Re^2}\nolimits} \left( {{s_{n_r}}} \right) + {\gamma }{\mathop{\rm Im^2}\nolimits} \left( {{s_{n_r}}} \right)${\rc , which satisfies the SNR constraint.}

We design the directional modulation precoder to minimize the total transmit power such that 1) the signals received by the $n$-th antenna of the $r$-th user result in a phase equal to that of $s_{n_r}$, and 2) the signals received by the $n$-th antenna of the $r$-th user create in-phase and quadrature-phase signal levels satisfying the thresholds defined in~\eqref{eqn:rl:im}. Accordingly, the precoder design problem is defined as
\begin{subequations}
\begin{align}
&\mathop {\min }\limits_{\bf{w}} \,\, {\left\| {\bf{w}} \right\|^2}
\nonumber\\
& \,\, \text{s.t.}   \,\,\,\,     \arg \left( {  {\bf{h}}_{n_r}^T  {\bf{w}}} \right) = \arg \left( {s_{n_r}} \right),                           
\label{subeq:w:c11}
\\
&             \qquad \,\, {\rm{Re}}\left( {  {\bf{h}}_{n_r}^T  {\bf{w}}  } \right) \ge \sqrt \gamma  {\rm{R}}{{\rm{e}}}\left( {s_{n_r}} \right), 
\label{subeq:w:c21}
\end{align}
\label{eqn:opt:pow:1}%
\end{subequations}
for $r = 1,...,R$ and $n=1,...,N$. 
{\rc Since the phase of the induced symbol is fixed, we just need to put the signal level constraint over the real or imaginary part of the received signal on each receiving antenna. Hence, we have included the constraint over the value of the real part in~\eqref{subeq:w:c21}.} {\rc Generally, some constraints of~\eqref{eqn:opt:pow:1} are satisfied with inequality and the rest are satisfied with equality~\cite{Sidiropoulos:2006}. This depends on the difference between $N_t$ and $N_U$. We will also show this through simulations in Section~\ref{sec:sim}. In the case that each user is associate with a precoder, i.e., the transmitter designs ${\bf{w}}_1$,..., ${\bf{w}}_K$ for $K$ users, the constraints are satisfied with equality at the optimal point~\cite{mu:user:beam:2014}.} If both sides of~\eqref{subeq:w:c21} are negative, the signal level constraints may not be satisfied. Since~\eqref{subeq:w:c11} holds at the optimal point, ${\mathop{\rm Re}\nolimits} \left( {{\bf{h}}_{{n_r}}^T{\bf{w}}} \right)$ has the same sign as ${\mathop{\rm Re}\nolimits} \left( {{s_{{n_r}}}} \right)$ at the optimal point. Therefore, we can multiply both sides of~\eqref{subeq:w:c21} by ${\rm Re}(s_{n_r})$ to get
\begin{subequations}
	\begin{align}
	&\mathop {\min }\limits_{\bf{w}} \,\, {\left\| {\bf{w}} \right\|^2}
	\nonumber\\
	& \,\, \text{s.t.}   \,\,\,\,     \arg \left( {  {\bf{h}}_{n_r}^T  {\bf{w}}} \right) = \arg \left( {s_{n_r}} \right),                           
	\label{subeq:w:c12}
	\\
	&             \qquad \,\,  {\rm{Re}}\left( {s_{n_r}} \right){\rm{Re}}\left( {  {\bf{h}}_{n_r}^T  {\bf{w}}  } \right) \ge \sqrt \gamma  {\rm{R}}{{\rm{e}}^2}\left( {s_{n_r}} \right). 
	\label{subeq:w:c22}
	\end{align}
	\label{eqn:opt:pow:2}%
\end{subequations}
To simplify~\eqref{eqn:opt:pow:2}, we can rewrite the phase constraint in~\eqref{subeq:w:c12} as 
\begin{align}
{\mathop{\rm Re}\nolimits} \left( {{\bf{h}}_{n_r}^T{\bf{w}}} \right){\alpha _{n_r}} - {\mathop{\rm Im}\nolimits} \left( {{\bf{h}}_{n_r}^T{\bf{w}}} \right) = 0, \,\,\,\, \forall \, n, \,\, \forall \, r, 
\label{eqn:pha}
\end{align}
where ${\alpha_{n_r}} = \tan \left( {{s_{n_r}}} \right)$. Since $\rm tan(\cdot)$ repeats after a $\pi$ radian period\footnote{If the phase of the $\mathit{M}$-PSK constellation falls on the points where $\tan$ function is undefined, e.g., $\frac{\pi }{2}$, we can add phase offset to the modulation.}, symbols with different phases can have the same $\rm tan$ value, e.g., $\tan \left( {\frac{\pi }{4}} \right) = \tan \left( {\frac{{3\pi }}{4}} \right)$. Therefore, replacing~\eqref{subeq:w:c12} with~\eqref{eqn:pha} creates ambiguity. To avoid this, we can add the constraint 
\begin{align}
&{\rm{Re}}\left( {{s_{n_r}}} \right){\rm{Re}}\left( {{\bf{h}}_{n_r}^T{\bf{w}}} \right) \ge 0, 
\label{eqn:amb:con}
\end{align}
to the design problem~\eqref{eqn:opt:pow:2} to avoid ambiguity. 
Interestingly, {\rc constraint~\eqref{eqn:amb:con} is} already present {\rc in~\eqref{subeq:w:c22}}. Note that~\eqref{eqn:pha} and~\eqref{eqn:amb:con} together are equivalent to~\eqref{subeq:w:c11}, so the required conditions to go from~\eqref{eqn:opt:pow:1} to~\eqref{eqn:opt:pow:2} still hold. Putting together the constraints~\eqref{eqn:pha} and~\eqref{subeq:w:c22} for all the users,~\eqref{eqn:opt:pow:2} is written into the following compact form
\begin{subequations}
\begin{align}
&\mathop {\min }\limits_{\bf{w}} \,\, {\left\| {\bf{w}} \right\|^2}
\nonumber\\
& \,\, \text{s.t.}   \,\,\,\,\,     {\bf{A}}{\rm{Re}}\left( {{{\bf{H}}_U}{\bf{w}}} \right) - {\rm{Im}}\left( {{{\bf{H}}_U}{\bf{w}}} \right) = {\bf{0}},    
\\
&                    \qquad \,\,  {\mathop{\rm Re}\nolimits} \left( {\bf{S}} \right){\mathop{\rm Re}\nolimits} \left( {{{\bf{H}}_U}{\bf{w}}} \right) \ge \sqrt {\gamma} \,  {{\bf{s}}_{r}}, 
\label{subeq:w:c13}
\end{align}
\label{eqn:opt:pow:3}%
\end{subequations}
where ${\bf{S}} = \rm diag\left( {\bf{s}} \right)$, ${\bf{s}}$ is an ${N_{U} \times 1}$ vector containing all the intended $\mathit{M}$-PSK symbols for the users with ${N_{{U}}} = \sum\nolimits_{r = 1}^R {{N_r}}$, ${{\bf{s}}_r} = {\mathop{\rm Re}\nolimits} \left( {\bf{s}} \right) \circ {\mathop{\rm Re}\nolimits} \left( {\bf{s}} \right)$, 
${\bf{A}} = \rm diag\left( \boldsymbol{\alpha} \right)$, $\boldsymbol{\alpha}  = {\left[ {{\alpha_{1_1}},...,{\alpha _{n_r}},...,{\alpha _{N_R}}} \right]^T}$.

To remove the real and imaginary valued parts from~\eqref{eqn:opt:pow:3}, we can use ${{\bf{H}}_U} = {\mathop{\rm Re}\nolimits} \left( {{{\bf{H}}_U}} \right) + i{\mathop{\rm Im}\nolimits} \left( {{{\bf{H}}_U}} \right)$ and ${\bf{w}} = {\mathop{\rm Re}\nolimits} \left( {\bf{w}} \right) + i{\mathop{\rm Im}\nolimits} \left( {\bf{w}} \right)$ presentations to separate the real and imaginary valued components of ${\bf{H}}_U{\bf{w}}$ as 
\begin{align}
{{\bf{H}}_U}{\bf{w}} =& {\rm{Re}}\left( {{{\bf{H}}_U}} \right){\rm{Re}}\left( {\bf{w}} \right) - {\rm{Im}}\left( {{{\bf{H}}_U}} \right){\rm{Im}}\left( {\bf{w}} \right)
\nonumber\\
&+ i\left[ {{\rm{Re}}\left( {{{\bf{H}}_U}} \right){\rm{Im}}\left( {\bf{w}} \right) + {\rm{Im}}\left( {{{\bf{H}}_U}} \right){\rm{Re}}\left( {\bf{w}} \right)} \right],
\label{eqn:Hw}
\end{align}
which leads into the following expressions
\begin{align}
{\rm{Re}}\left( {{\bf{H}}_U{\bf{w}}} \right) = {\bf{H}}_{U_1}\widetilde {\bf{w}}, \,\, {\rm{Im}}\left( {{\bf{H}}_U{\bf{w}}} \right) = {\bf{H}}_{U_2}\widetilde {\bf{w}},
\label{eqn:rel:ima:Hw}
\end{align}
where $\widetilde {\bf{w}} = {\left[ {{\rm{Re}}\left( {\bf{w}}^T \right),{\rm{Im}}\left( {\bf{w}}^T \right)} \right]^T}$, ${{\bf{H}}_{{U_1}}} = \left[ {{\mathop{\rm Re}\nolimits} \left( {{{\bf{H}}_U}} \right), - \rm Im\left( {{{\bf{H}}_U}} \right)} \right]$, and ${{\bf{H}}_{{U_2}}} = \left[ {\rm Im\left( {{{\bf{H}}_U}} \right),{\mathop{\rm Re}\nolimits} \left( {{{\bf{H}}_U}} \right)} \right]$. Also, it is easy to see that ${\left\| {\widetilde {\bf{w}}} \right\|^2} = {\left\| {\bf{w}} \right\|^2}$. 

Using the equivalents of ${\rm{Re}}\left( {{\bf{H}}_U{\bf{w}}} \right)$ and ${\rm{Im}}\left( {{\bf{H}}_U{\bf{w}}} \right)$ derived in~\eqref{eqn:rel:ima:Hw},~\eqref{eqn:opt:pow:3} transforms into
\begin{subequations}
\begin{align}
& \mathop {\min }\limits_{\widetilde {\bf{w}}}  \,\,\,\,  \left\| {\widetilde {\bf{w}}} \right\|^2
\nonumber\\
& \, \text{s.t.}   \,\,\,\,\,\,\,  \left( {{\bf{A}}{{\bf{H}}_{{U_1}}} - {{\bf{H}}_{{U_2}}}} \right)\widetilde {\bf{w}} =\bf{0},
\label{subeq:w:c41}
\\
&                  \qquad \,\,\,\,   {\mathop{\rm Re}\nolimits} \left( {\bf{S}} \right){{\bf{H}}_{{U_1}}}\widetilde {\bf{w}} \ge \sqrt{\gamma} \,  {{\bf{s}}_r}.
\label{subeq:w:c42}
\end{align}
\label{eqn:opt:pow:4}%
\end{subequations}
\begin{pro}
A necessary condition for the existence of the optimal precoder for the directional modulation is $N_t > \frac{r^{'}}{2}$ where $r^{'}$ is the rank of ${{\bf{A}}{{\bf{H}}_{{U_1}}} - {{\bf{H}}_{{U_2}}}}$. If ${{\bf{A}}{{\bf{H}}_{{U_1}}} - {{\bf{H}}_{{U_2}}}}$ is full rank, the necessary condition becomes $N_t > \frac{N_{U}}{2}$, which means that the number of transmit antennas needs to be more than half of the total number of receiver antennas. 
\label{pro:nec:con}
\end{pro}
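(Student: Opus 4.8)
The plan is to read the feasibility of problem~\eqref{eqn:opt:pow:4} as a statement about the null space of its equality-constraint matrix. First I would note that the equality constraint in~\eqref{subeq:w:c41} is homogeneous, so every feasible $\widetilde{\bf{w}}$ lies in the null space $\mathcal{N}$ of the real $N_U \times 2N_t$ matrix ${\bf{A}}{\bf{H}}_{U_1} - {\bf{H}}_{U_2}$; by rank--nullity, $\dim\mathcal{N} = 2N_t - r'$.

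Next I would show that $\widetilde{\bf{w}} = {\bf{0}}$ is infeasible, so that a feasible point is necessarily a \emph{nonzero} element of $\mathcal{N}$. The right-hand side of the inequality~\eqref{subeq:w:c42} is $\sqrt{\gamma}\,{\bf{s}}_r$ with ${\bf{s}}_r = {\rm Re}({\bf{s}}) \circ {\rm Re}({\bf{s}})$, whose $(n_r)$-th entry is ${\rm Re}(s_{n_r})^2$. Since the $M$-PSK constellation is chosen (adding a phase offset if needed, cf.\ the footnote following~\eqref{eqn:pha}) so that ${\rm Re}(s_{n_r}) \neq 0$ for all $n,r$, every entry of ${\bf{s}}_r$ is strictly positive, and $\widetilde{\bf{w}} = {\bf{0}}$ would violate~\eqref{subeq:w:c42}. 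Hence feasibility of~\eqref{eqn:opt:pow:4} implies $\mathcal{N}$ contains a nonzero vector, i.e.\ $2N_t - r' \geq 1$, which gives $N_t > r'/2$ and proves the first claim.

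Finally, for the full-rank case I would specialize $r' = \min(N_U, 2N_t)$: if $2N_t \leq N_U$ then $r' = 2N_t$, so $\mathcal{N} = \{{\bf{0}}\}$ and no precoder exists; therefore $2N_t > N_U$, i.e.\ $N_t > N_U/2$, and then $r' = N_U$ with $\dim\mathcal{N} = 2N_t - N_U \geq 1$, consistent with the general bound. I do not expect a serious obstacle here; the only point needing care is the argument that ${\bf{0}}$ is infeasible --- which is exactly where the nonzero-real-part property of the constellation enters --- together with the observation that this is a \emph{necessary} condition only: a nontrivial null space does not by itself guarantee that some vector in it, after scaling, also satisfies the componentwise inequalities in~\eqref{subeq:w:c42}.
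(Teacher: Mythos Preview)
Your proposal is correct and follows essentially the same line as the paper's proof: both argue that feasibility of~\eqref{eqn:opt:pow:4} forces $\widetilde{\bf{w}}$ to lie in the null space of ${\bf{A}}{\bf{H}}_{U_1}-{\bf{H}}_{U_2}$, and hence that this null space must be nontrivial, i.e.\ $2N_t-r'>0$. Your version is in fact slightly more careful than the paper's, since you explicitly justify \emph{why} $\widetilde{\bf{w}}={\bf 0}$ is infeasible (via strict positivity of the entries of ${\bf s}_r$), whereas the paper simply asserts that the null space ``should exist''; the paper, on the other hand, computes the null-space dimension through the SVD ${\bf U}\Sigma{\bf V}^H$ rather than rank--nullity, which is not needed for the proposition itself but sets up the basis ${\bf E}$ used immediately afterward in~\eqref{eqn:E:lam}.
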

\begin{proof}
Constraint~\eqref{subeq:w:c41} shows that $\widetilde {\bf{w}}$ should lie in the null space of the matrix ${{\bf{A}}{{\bf{H}}_{{U_1}}} - {{\bf{H}}_{{U_2}}}}$. If the SVD of ${{\bf{A}}{{\bf{H}}_{{U_1}}} - {{\bf{H}}_{{U_2}}}}$ is shown by ${\bf{U}}\Sigma {{\bf{V}}^H}$, the orthonormal basis for the null space of ${{\bf{A}}{{\bf{H}}_{{U_1}}} - {{\bf{H}}_{{U_2}}}}$ are the last $2 N_t - r^{'}$ columns of the matrix $\bf{V}$ with $r^{'}$ being the rank of ${{\bf{A}}{{\bf{H}}_{{U_1}}} - {{\bf{H}}_{{U_2}}}}$~\cite{Strang2009}. If ${{\bf{A}}{{\bf{H}}_{{U_1}}} - {{\bf{H}}_{{U_2}}}}$ is full rank, we have $r^{'}=N_{U}$. For~\eqref{eqn:opt:pow:4} to be feasible, the mentioned null space should exist, meaning that $2 N_t - r^{'} >0$.
\end{proof}
\vspace{3pt}
Provided that the necessary condition of Proposition~\ref{pro:nec:con} is met, a sufficient condition can be proposed from a geometrical point of view; namely that the feasible set of~\eqref{eqn:opt:pow:4} is not empty. This holds if and only if the intersection of the linear spaces in the constraint set constitutes a non-empty set. 	

According to Proposition~\ref{pro:nec:con}, the null space of ${{\bf{A}}{{\bf{H}}_{{U_1}}} - {{\bf{H}}_{{U_2}}}}$ spans $\widetilde {\bf{w}}$ as $\widetilde {\bf{w}} = {\bf{E \boldsymbol \lambda }}$ where 
\begin{align}
{\bf{E}} = \left[ {{{\bf{v}}_{r^{'} + 1}},...,{{\bf{v}}_{2 N_t}}} \right] , \, \, \boldsymbol \lambda  = \left[ {{\lambda_1},...,{\lambda _{2N_t-r^{'}}}} \right].
\label{eqn:E:lam}
\end{align}
By replacing $\widetilde {\bf{w}}$ with $\bf{E \boldsymbol \lambda }$,~\eqref{eqn:opt:pow:4} boils down into
\begin{align}
& \mathop {\min }\limits_{ {\boldsymbol{\lambda}}}  \,\,\,\,  \left\| { {{\boldsymbol{\lambda }}}} \right\|^2
\nonumber\\
& \, \text{s.t.}   \,\,\,\,\,\,\,\,  {\mathop{\rm Re}\nolimits} \left( {\bf{S}} \right){{\bf{H}}_{{U_1}}} {\bf{E}}  {\boldsymbol{\lambda}} \ge \sqrt {\gamma} \,  {{\bf{s}}_r},
\label{eqn:opt:pow:5}
\end{align}
Problem\footnote{The design problem~\eqref{eqn:opt:pow:5} can be extended to M-QAM modulation~\cite{con:glob:2015} by changing the constraint into equality. A detailed derivation falls beyond the scope of this paper.}~\eqref{eqn:opt:pow:5} is a convex linearly constrained quadratic programming problem and can be solved efficiently using standard convex optimization techniques. The design problem~\eqref{eqn:opt:pow:5} needs to be solved once for each set of the symbols, ${\bf{s}}_T$. Using optimization packages such as CVX to solve~\eqref{eqn:opt:pow:5} can be time consuming, hence, we propose two other approaches to solve~\eqref{eqn:opt:pow:5}. 
\vspace{3pt}
\subsubsection{Iterative solution} 
\label{subsubsec:lar:tx:sol:a}
In this part, we propose an iterative approach to solve~\eqref{eqn:opt:pow:5}. To do so, first, we define a real valued auxiliary vector denoted by $\bf{u}$ to change the inequality constraint of~\eqref{eqn:opt:pow:5} into equality as  
\begin{align}
& \mathop {\min }\limits_{ {\boldsymbol{\lambda}}, \bf{u}}  \,\,\,\,  \left\| { {{\boldsymbol{\lambda }}}} \right\|^2
\nonumber\\
& \, \text{s.t.}   \,\,\,\,\,\,\,\,  {{\bf{B} \boldsymbol{\lambda} }} = \sqrt \gamma {\mkern 1mu} {{\bf{s}}_r}+{\bf{u}}, \,\, \bf{u} \ge 0.
\label{eqn:opt:pow:7}
\end{align}
{\rc where ${\bf{B}} = {\rm{Re}}\left( {\bf{S}} \right){{\bf{H}}_{{U_1}}}{\bf{E}}$}. Using the penalty method~\cite{Boyd}, we can write~\eqref{eqn:opt:pow:7} as an unconstrained optimization problem
\begin{align}
& \mathop {\min }\limits_{\boldsymbol{\lambda} ,{\bf{u}} \ge 0} \,\,\,{\left\| \boldsymbol{\lambda}  \right\|^2} + \eta {\left\| {{\bf{B}}\boldsymbol{\lambda}  - \left( {\sqrt \gamma {\rc {{\bf{s}}_r}} + {\bf{u}}} \right)} \right\|^2},
\label{eqn:opt:pow:pen}
\end{align}
which is equivalent to~\eqref{eqn:opt:pow:7} when $\eta \to \infty$. We can solve~\eqref{eqn:opt:pow:pen} using an iterative approach by first optimizing $\bf{u}$ and considering $\boldsymbol \lambda$ to be fixed, and then optimizing $\bf{u}$ and considering $\boldsymbol{\lambda}$ to be fixed. In the following, we mention these two optimization problems and their closed-form solutions. 

When optimizing over $\bf{u}$ and keeping $\boldsymbol{\lambda}$ fixed, the optimization problem to be solved can be written as
\begin{align}
& \mathop {\min }\limits_{{\bf{u}} \ge 0} \,\,\,{\left\| {{\bf{u}} - \left( {{\bf{B}} \boldsymbol{\lambda}  - \sqrt \gamma {\rc {{\bf{s}}_r}}} \right)} \right\|^2}.
\label{eqn:opt:itr:u}
\end{align}
The closed-form solution of~\eqref{eqn:opt:itr:u} is given in Lemma~\ref{lem:itr:u}.
\begin{lem}
The closed-form solution of~\eqref{eqn:opt:itr:u} is ${\bf{u^{\star}}} = {\left( {{\bf{B}} \boldsymbol{\lambda}  - \sqrt \gamma {\rc {{\bf{s}}_r}}} \right)_ + }$.
\label{lem:itr:u}
\end{lem}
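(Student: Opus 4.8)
The plan is to exploit the fact that both the objective and the feasible set in~\eqref{eqn:opt:itr:u} decouple across the entries of $\mathbf{u}$, so the problem reduces to a family of independent scalar projections onto the nonnegative ray $[0,\infty)$, each of which is solved by inspection.

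Concretely, I would first abbreviate $\mathbf{c} \triangleq \mathbf{B}\boldsymbol{\lambda} - \sqrt{\gamma}\,\mathbf{s}_r$ and write $\|\mathbf{u}-\mathbf{c}\|^2 = \sum_i (u_i - c_i)^2$, while the constraint $\mathbf{u}\ge 0$ is simply the conjunction of the scalar constraints $u_i \ge 0$. Hence a vector $\mathbf{u}^\star$ solves~\eqref{eqn:opt:itr:u} if and only if each component $u_i^\star$ minimizes $(u_i - c_i)^2$ over $u_i \ge 0$ separately. I would then dispatch the scalar problem by the sign of $c_i$: if $c_i \ge 0$, the unconstrained minimizer $u_i = c_i$ is feasible and therefore optimal; if $c_i < 0$, the parabola $(u_i - c_i)^2$ is strictly increasing on $[0,\infty)$, so its minimum over that ray is attained at the boundary point $u_i^\star = 0$. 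In either case $u_i^\star = \max\{c_i,0\}$, which, by the convention fixed in the Notation paragraph, is precisely the $i$-th entry of $\mathbf{c}_+$. Reassembling the components yields $\mathbf{u}^\star = \mathbf{c}_+ = \left(\mathbf{B}\boldsymbol{\lambda} - \sqrt{\gamma}\,\mathbf{s}_r\right)_+$, and uniqueness is immediate because each scalar subproblem is a strictly convex quadratic over a closed interval.

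There is essentially no serious obstacle here; the only points requiring minimal care are matching the definition of $(\cdot)_+$ used in the paper and noting that a minimizer exists and is unique since we are minimizing a strictly convex quadratic over a nonempty closed convex set. Should a variational-calculus argument be preferred over the coordinatewise one, the same conclusion follows from the KKT conditions for~\eqref{eqn:opt:itr:u}: stationarity reads $2(\mathbf{u} - \mathbf{c}) - \boldsymbol{\mu} = \mathbf{0}$ with dual feasibility $\boldsymbol{\mu}\ge 0$, primal feasibility $\mathbf{u}\ge 0$, and complementary slackness $\mu_i u_i = 0$; one then verifies directly that the pair $\mathbf{u}^\star = \mathbf{c}_+$ and $\boldsymbol{\mu}^\star = 2(\mathbf{c}_+ - \mathbf{c}) = 2(-\mathbf{c})_+$ satisfies all four conditions, which certifies optimality by convexity.
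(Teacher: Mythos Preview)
Your proof is correct and follows essentially the same coordinatewise argument as the paper: both observe that the problem decouples across entries and that each scalar minimizer is $c_i$ when $c_i\ge 0$ and $0$ otherwise, yielding $\mathbf{u}^\star=(\mathbf{B}\boldsymbol{\lambda}-\sqrt{\gamma}\,\mathbf{s}_r)_+$. Your write-up is simply more explicit about separability and uniqueness, and the optional KKT verification is a nice addition but not needed.
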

\begin{proof}
To solve~\eqref{eqn:opt:itr:u}, we need to minimize the distance between the vectors $\bf{u}$ and ${\left( {{\bf{B}} \boldsymbol{\lambda}  - \sqrt \gamma {\rc {{\bf{s}}_r}}} \right)}$. Since $\boldsymbol{\lambda}$ is fixed, the elements of ${\left( {{\bf{B}} \boldsymbol{\lambda}  - \sqrt \gamma {\rc {{\bf{s}}_r}}} \right)}$ are known. If an element of ${{\bf{B \boldsymbol{\lambda} }} - \sqrt \gamma {\rc {{\bf{s}}_r}}}$ is nonnegative, we pick up the same value for the corresponding element of $\bf{u}$. If an element of ${{\bf{B \boldsymbol{\lambda} }} - \sqrt \gamma {\rc {{\bf{s}}_r}}}$ is negative, we pick up zero for the corresponding element of $\bf{u}$ since $\bf{u} \ge 0$. This is equivalent to picking up $\bf{u}$ as 
\begin{align}
& {\bf{u^{\star}}} = {\left( {{\bf{B}} \boldsymbol{\lambda}  - \sqrt \gamma {\rc {{\bf{s}}_r}}} \right)_ + }.
\label{eqn:opt:itr:u:cf}
\end{align}
\end{proof}
\vspace{3pt}
When optimizing over $\boldsymbol{\lambda}$ and keeping $\bf{u}$ fixed, the optimization problem is 
\begin{align}
& \mathop {\min }\limits_{\boldsymbol{\lambda} } \,\,\,{\left\| \boldsymbol{\lambda}  \right\|^2} + \eta {\left\| {{\bf{B}}\boldsymbol{\lambda}  - \left( {\sqrt \gamma {\rc {{\bf{s}}_r}} + {\bf{u}}} \right)} \right\|^2}.
\label{eqn:opt:itr:lam}
\end{align}
The closed-form solution of~\eqref{eqn:opt:itr:lam} is given in Lemma~\ref{lem:itr:lam}.
\begin{lem}
	The closed-form solution of~\eqref{eqn:opt:itr:lam} is ${\boldsymbol{\lambda} }^\star  = {\left( {\frac{{\bf{I}}}{\eta } + {{\bf{B}}^T}{\bf{B}}} \right)^{ - 1}}{{\bf{B}}^T}\left( {{\bf{a}} + {\bf{u}}} \right)$.
	\label{lem:itr:lam}
\end{lem}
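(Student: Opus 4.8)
The plan is to recognize \eqref{eqn:opt:itr:lam} as an unconstrained, strictly convex quadratic program in $\boldsymbol{\lambda}$ and to read off its minimizer from the first-order optimality condition. First I would write the objective as $f(\boldsymbol{\lambda}) = \|\boldsymbol{\lambda}\|^2 + \eta\|\mathbf{B}\boldsymbol{\lambda} - (\mathbf{a}+\mathbf{u})\|^2$ with the shorthand $\mathbf{a} := \sqrt{\gamma}\,{\bf{s}}_r$, so that $\mathbf{a}+\mathbf{u}$ matches the term $\sqrt{\gamma}\,{\bf{s}}_r+{\bf{u}}$ appearing in \eqref{eqn:opt:itr:lam}. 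Its Hessian is $\nabla^2 f = 2(\mathbf{I} + \eta\,{\bf{B}}^T{\bf{B}})$, which is positive definite for every $\eta>0$ because ${\bf{B}}^T{\bf{B}}\succeq\mathbf{0}$ and $\mathbf{I}\succ\mathbf{0}$; hence $f$ is strictly convex and coercive and admits a unique global minimizer, characterized by $\nabla f(\boldsymbol{\lambda})=\mathbf{0}$.

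Next I would compute $\nabla f(\boldsymbol{\lambda}) = 2\boldsymbol{\lambda} + 2\eta\,{\bf{B}}^T\!\left({\bf{B}}\boldsymbol{\lambda} - (\mathbf{a}+\mathbf{u})\right)$, set it to zero, and collect terms to obtain the normal equations $(\mathbf{I}+\eta\,{\bf{B}}^T{\bf{B}})\boldsymbol{\lambda} = \eta\,{\bf{B}}^T(\mathbf{a}+\mathbf{u})$. Since $\mathbf{I}+\eta\,{\bf{B}}^T{\bf{B}}$ is positive definite it is invertible (unconditionally, regardless of the rank of ${\bf{B}}$), so $\boldsymbol{\lambda}^\star = (\mathbf{I}+\eta\,{\bf{B}}^T{\bf{B}})^{-1}\eta\,{\bf{B}}^T(\mathbf{a}+\mathbf{u})$. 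Finally, pulling the scalar inside the inverse via $(\mathbf{I}+\eta\,{\bf{B}}^T{\bf{B}})^{-1}\eta = \bigl(\tfrac{1}{\eta}(\mathbf{I}+\eta\,{\bf{B}}^T{\bf{B}})\bigr)^{-1} = \bigl(\tfrac{\mathbf{I}}{\eta}+{\bf{B}}^T{\bf{B}}\bigr)^{-1}$ yields exactly the claimed form $\boldsymbol{\lambda}^\star = \bigl(\tfrac{\mathbf{I}}{\eta}+{\bf{B}}^T{\bf{B}}\bigr)^{-1}{\bf{B}}^T(\mathbf{a}+\mathbf{u})$.

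I do not anticipate a genuine obstacle here: the statement is the textbook Tikhonov-regularized least-squares identity. The only points needing care are making explicit that $\mathbf{a}$ denotes $\sqrt{\gamma}\,{\bf{s}}_r$ and justifying invertibility of $\tfrac{\mathbf{I}}{\eta}+{\bf{B}}^T{\bf{B}}$ for $\eta>0$. If one wishes to avoid differentiation, the same result follows by completing the square: writing $f(\boldsymbol{\lambda}) = \boldsymbol{\lambda}^T(\mathbf{I}+\eta\,{\bf{B}}^T{\bf{B}})\boldsymbol{\lambda} - 2\eta\,\boldsymbol{\lambda}^T{\bf{B}}^T(\mathbf{a}+\mathbf{u}) + \text{const}$ and factoring the positive-definite quadratic form about its center gives the minimizer and shows it is unique.
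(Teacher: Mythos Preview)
Your proposal is correct and follows essentially the same approach as the paper: expand the quadratic objective, take the derivative (gradient) with respect to $\boldsymbol{\lambda}$, set it to zero, and invoke positive definiteness of $\tfrac{\mathbf{I}}{\eta}+{\bf{B}}^T{\bf{B}}$ (the paper calls this ``diagonal loading'') to justify the inverse. Your version is slightly more explicit about strict convexity and uniqueness via the Hessian, and you correctly identify the undefined $\mathbf{a}$ as $\sqrt{\gamma}\,{\bf{s}}_r$, but there is no substantive difference in method.
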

\begin{proof}
First, we expand~\eqref{eqn:opt:itr:lam} as
\begin{align}
 f\left( {\boldsymbol{\lambda} }  \right) =& {\left\| {\boldsymbol{\lambda} }  \right\|^2} + \eta {\left\| {{\bf{B}} {\boldsymbol{\lambda} }  - \left( {{\rc \sqrt{ \gamma}} {\rc {{\bf{s}}_r}} + {\bf{u}}} \right)} \right\|^2}
\nonumber\\
 =& {{\boldsymbol{\lambda} } ^T}\left( {{\bf{I}} + \eta {{\bf{B}}^T}{\bf{B}}} \right){\boldsymbol{\lambda} }  - 2\eta {{\boldsymbol{\lambda} } ^T}\left( {{{\bf{B}}^T}\gamma {\rc {{\bf{s}}_r}} + {{\bf{B}}^T}{\bf{u}}} \right) 
\nonumber\\
&+ \eta {\left( {\sqrt \gamma  {\rc {{\bf{s}}_r}} + {\bf{u}}} \right)^T}\left( {\sqrt \gamma  {\rc {{\bf{s}}_r}} + {\bf{u}}} \right).
\label{eqn:opt:itr:lam:ex}
\end{align}
Taking the derivative of $f\left( {\boldsymbol{\lambda} }  \right)$ with respect to ${\boldsymbol{\lambda} }$ yields
\begin{align}
{\boldsymbol{\lambda} }^\star  = {\left( {\frac{{\bf{I}}}{\eta } + {{\bf{B}}^T}{\bf{B}}} \right)^{ - 1}}{{\bf{B}}^T}\left( {{\bf{a}} + {\bf{u}}} \right).
\label{eqn:opt:itr:lam:cf}
\end{align}
Since ${{\bf{B}}^T}{\bf{B}}$ is positive semidefinite, addition of $\frac{{\bf{I}}}{\eta }$ to ${{\bf{B}}^T}{\bf{B}}$ for $\eta \neq \infty$ leads into diagonal loading of ${{\bf{B}}^T}{\bf{B}}$, which makes ${\frac{{\bf{I}}}{\eta } + {{\bf{B}}^T}{\bf{B}}}$ invertible.
\end{proof}
\vspace{3pt}
Using the closed-form solutions mentioned in Lemmas~\ref{lem:itr:u} and~\ref{lem:itr:lam}, we propose Algorithm~\ref{alg:itr} to solve~\eqref{eqn:opt:pow:pen}, where the matrix inversion in~\eqref{eqn:opt:itr:lam:cf} needs to be calculated once per symbol transmission.
\begin{lem}
Algorithm~\ref{alg:itr} monotonically converges to the optimal point.
\end{lem}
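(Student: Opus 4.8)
The plan is to recognize Algorithm~\ref{alg:itr} as a block coordinate descent (alternating minimization) scheme applied to the single objective function
\begin{align}
g(\boldsymbol{\lambda},{\bf{u}}) = {\left\| \boldsymbol{\lambda}  \right\|^2} + \eta {\left\| {{\bf{B}}\boldsymbol{\lambda}  - \left( {\sqrt \gamma {{\bf{s}}_r} + {\bf{u}}} \right)} \right\|^2}
\label{eqn:conv:obj}
\end{align}
over the feasible set $\{(\boldsymbol{\lambda},{\bf{u}}) : {\bf{u}} \ge 0\}$. The first thing I would establish is the monotonic-decrease (descent) property: at iteration $k$, the update ${\bf{u}}^{(k)}$ is the \emph{exact} minimizer of $g({\boldsymbol{\lambda}}^{(k)},\cdot)$ over ${\bf{u}}\ge 0$ by Lemma~\ref{lem:itr:u}, and the update ${\boldsymbol{\lambda}}^{(k+1)}$ is the \emph{exact} unconstrained minimizer of $g(\cdot,{\bf{u}}^{(k)})$ by Lemma~\ref{lem:itr:lam}; hence
\begin{align}
g\big({\boldsymbol{\lambda}}^{(k+1)},{\bf{u}}^{(k)}\big) \le g\big({\boldsymbol{\lambda}}^{(k)},{\bf{u}}^{(k)}\big) \le g\big({\boldsymbol{\lambda}}^{(k)},{\bf{u}}^{(k-1)}\big).
\label{eqn:conv:mono}
\end{align}
So the sequence of objective values is non-increasing. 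Since $g \ge 0$ it is bounded below, and therefore the objective sequence converges to some limit $g^\star$; this already gives the ``monotonically converges'' part of the claim for the cost.

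Next I would argue convergence of the iterates themselves to a minimizer. The key structural facts are that $g$ is a convex quadratic in $(\boldsymbol{\lambda},{\bf{u}})$ jointly, that it is \emph{strongly} convex in $\boldsymbol{\lambda}$ for fixed ${\bf{u}}$ (the Hessian ${\bf{I}} + \eta {\bf{B}}^T{\bf{B}} \succeq {\bf{I}} \succ 0$, as noted in the proof of Lemma~\ref{lem:itr:lam}), and that the ${\bf{u}}$-block feasible set $\{{\bf{u}}\ge 0\}$ is closed and convex with the ${\bf{u}}$-subproblem being a projection, hence also well-posed. Under these conditions the standard two-block convex alternating-minimization convergence theorem applies: because each block update is a unique minimizer and the objective is continuous, every limit point $(\boldsymbol{\lambda}^\star,{\bf{u}}^\star)$ of the iterate sequence satisfies the block-optimality (partial minimum) conditions in each coordinate, and for a jointly convex differentiable-plus-separable-indicator objective these partial-optimality conditions coincide with global optimality (the KKT conditions of~\eqref{eqn:opt:pow:pen}). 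Coercivity of $g$ in $\boldsymbol{\lambda}$ (the $\|\boldsymbol{\lambda}\|^2$ term) plus the fact that ${\bf{u}}$ stays in a region where the objective is controlled keeps the iterates bounded, so a limit point exists, and strong convexity in $\boldsymbol{\lambda}$ pins down the limit uniquely. I would then invoke that~\eqref{eqn:opt:pow:pen} is equivalent to~\eqref{eqn:opt:pow:5} (hence to~\eqref{eqn:opt:pow:7}) as $\eta\to\infty$, so the limit is the optimal precoder.

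The main obstacle I anticipate is being careful about what ``optimal point'' means here and making the limit-point argument airtight rather than hand-wavy. Two-block alternating minimization does not converge to a global optimum for arbitrary nonconvex problems, so the whole argument rests on convexity of~\eqref{eqn:conv:obj} together with the fact that the nonsmooth part (the constraint ${\bf{u}}\ge 0$) is separable from the $\boldsymbol{\lambda}$-block — this is exactly the setting where Bertsekas-style results (e.g., the classical result that block coordinate descent converges for convex functions that are strictly convex in each block, or have at most two blocks) guarantee convergence to a minimizer. I would therefore spend the bulk of the write-up (i) verifying joint convexity and the unique-block-minimizer property, (ii) establishing boundedness of $\{\boldsymbol{\lambda}^{(k)}\}$ via coercivity and of $\{{\bf{u}}^{(k)}\}$ via~\eqref{eqn:opt:itr:u:cf} together with boundedness of $\{\boldsymbol{\lambda}^{(k)}\}$, and (iii) passing to the limit in the optimality conditions ${\boldsymbol{\lambda}}^{(k+1)} = ({\bf{I}}/\eta + {\bf{B}}^T{\bf{B}})^{-1}{\bf{B}}^T(\sqrt\gamma{\bf{s}}_r + {\bf{u}}^{(k)})$ and ${\bf{u}}^{(k)} = ({\bf{B}}\boldsymbol{\lambda}^{(k)} - \sqrt\gamma{\bf{s}}_r)_+$ using continuity of both maps, concluding that the limit is a fixed point and hence satisfies the KKT system of the convex problem~\eqref{eqn:opt:pow:pen}.
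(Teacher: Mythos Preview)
Your proposal is correct and follows the same approach as the paper: recognize Algorithm~\ref{alg:itr} as two-block alternating minimization on the penalized objective~\eqref{eqn:opt:pow:pen}, use Lemmas~\ref{lem:itr:u} and~\ref{lem:itr:lam} to obtain the monotone-decrease chain, and conclude via the lower bound $g\ge 0$. The paper's own proof is considerably terser --- it stops at monotone decrease plus boundedness below and asserts convergence to the optimum --- whereas you supply the additional joint-convexity, boundedness-of-iterates, and fixed-point reasoning that makes the ``limit is optimal'' step rigorous; but the underlying strategy is the same.
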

\begin{proof}
Let's denote the objective function in~\eqref{eqn:opt:pow:pen} by $f\left( {{\boldsymbol {\lambda}},{\bf{u}}} \right)$. Assume ${{{\boldsymbol{\lambda}} _0}}$ and ${\bf{u}}_0$ are initial values of $f\left( {{\boldsymbol {\lambda}},{\bf{u}}} \right)$. Using ${{{\boldsymbol{\lambda}} _0}}$ in Algorithm~\ref{alg:itr} gives us ${{\bf{u}}^\star}$ and ${{{\boldsymbol{\lambda}}^\star}}$ from~\eqref{eqn:opt:itr:u:cf} and~\eqref{eqn:opt:itr:lam:cf}, respectively, which results in
\begin{align}
f\left( {{\boldsymbol{\lambda} ^\star},{{\bf{u}}^\star}} \right) \le f\left( {{\boldsymbol{\lambda}_0},{{\bf{u}}^\star}} \right) \le f\left( {{\boldsymbol{\lambda}_0},{{\bf{u}}_0}} \right).
\label{eqn:opt:seq}
\end{align}
Since fixing ${\boldsymbol {\lambda}}$,~\eqref{eqn:opt:itr:u}, or $\bf{u}$,~\eqref{eqn:opt:itr:lam}, leads into a convex function, each iteration in Algorithm~\ref{alg:itr} monotonically gets closer to the optimal point. This along with the fact that $f\left( {{\boldsymbol {\lambda}},{\bf{u}}} \right)$ is lower bounded at zero,  guarantees the convergence of Algorithm~\ref{alg:itr} to the optimal point.
\end{proof}	
\vspace{3pt}    
\algnewcommand{\algorithmicgoto}{\textbf{Go to}}%
\algnewcommand{\Goto}[1]{\algorithmicgoto~\ref{#1}}%
\begin{algorithm}[t]
	\caption{Iterative approach to solve~\eqref{eqn:opt:pow:pen}}
	\begin{algorithmic}[1]
		\State Pick up ${\boldsymbol{\lambda}_n } \in {\rm I\!R}^{2 N_t} $ and $\eta \in (0,\left. \infty  \right]$;
		\State Substitute ${\boldsymbol{\lambda}_n }$ in~\eqref{eqn:opt:itr:u:cf} to get ${\bf{u}}_n$; \label{1}
		\State Substitute ${\bf{u}}_n$ in~\eqref{eqn:opt:itr:lam:cf} to get ${\boldsymbol{\lambda}_{n+1} }$;
		\If{ $\left\| {{\boldsymbol{\lambda}_n } - {\boldsymbol{\lambda}_{n+1} }} \right\| \ge \epsilon$ }
		\State $n=n+1$;
		\State  \Goto{1};
		\EndIf
	\end{algorithmic}
	\label{alg:itr}
\end{algorithm}
\vspace{3pt}
\subsubsection{Non-negative least squares} 
\label{subsubsec:lar:tx:sol:b}
We can derive $\boldsymbol\lambda$ using the constraint of~\eqref{eqn:opt:pow:7} as 
\begin{align}
{\boldsymbol{\lambda }} = \,{{\bf{B}}^\dag }\left( {\sqrt \gamma  {\rc {{\bf{s}}_r}} + {\bf{u}}} \right).
\label{eqn:opt:lambda}
\end{align}
Replacing the $\boldsymbol{\lambda}$ derived in~\eqref{eqn:opt:lambda} back into the objective of~\eqref{eqn:opt:pow:7} yields
\begin{align}
&\mathop {\min }\limits_{\bf{u}} {\mkern 1mu} {\mkern 1mu} {\mkern 1mu} {\mkern 1mu} {\left\| {{{\bf{B}}^\dag }{\bf{u}} + {\mkern 1mu} \sqrt \gamma  {{\bf{B}}^\dag }{\rc {{\bf{s}}_r}}} \right\|^2}
\nonumber\\
& \, \text{s.t.}   \,\,\,\,\,\,\,\, {\bf{u}} \ge 0,
\label{eqn:opt:nnls}
\end{align}
which is a non-negative least squares optimization problem. Since ${{\bf{B}}^\dag }$ and ${\sqrt \gamma  {{\bf{B}}^\dag }{\rc {{\bf{s}}_r}}}$ are real valued, we can use the method of~\cite{Solving:1995} or its fast version~\cite{Bro1997b} to solve~\eqref{eqn:opt:nnls}. 
{\rc We analyze the computational complexity of the non-negative least squares in Section~\ref{sec:complex} and mention its} computational time in Section~\ref{sec:sim}. Similar to Section~\ref{subsubsec:lar:tx:sol:a}, ${{\bf{B}}^\dag }$ needs to be calculated once per symbol transmission.  
{\rc \subsection{The Case of Strong Transmitter ($N_e < N_t${\rc , Relaxed Phase)}}       
\label{subsec:lar:tx:relax}
The phases of the received signals in~\eqref{eqn:opt:pow:1} are fixed, which decreases the degrees of freedom in designing $\bf{w}$, and consequently the power efficiency. To improve the power efficiency in the transmitter side, we can consider a region instead of a line for the phase of the received signal on each receiving antenna. In the $\mathit{M}$-PSK modulation, each symbol has a detection region within $\pm \frac{\pi }{M}$ degrees of its phase. The detection and relaxed phase regions for a reference symbol $s_0$ with the angle ${\varphi _{{s_0}}}=arg(s_0)$ are shown in Fig.~\ref{fig:sys:mod:relaxed}~\cite{Masouros:relaxed:2010}. According to the characterization in Fig.~\ref{fig:sys:mod:relaxed}, the relaxed phase design problem is defined as~\cite{Masouros:relaxed:2010,Masouros:2015,Energy-Efficient:CI:2016} 
\begin{figure}[]
	\centering
	\includegraphics[width=6cm]{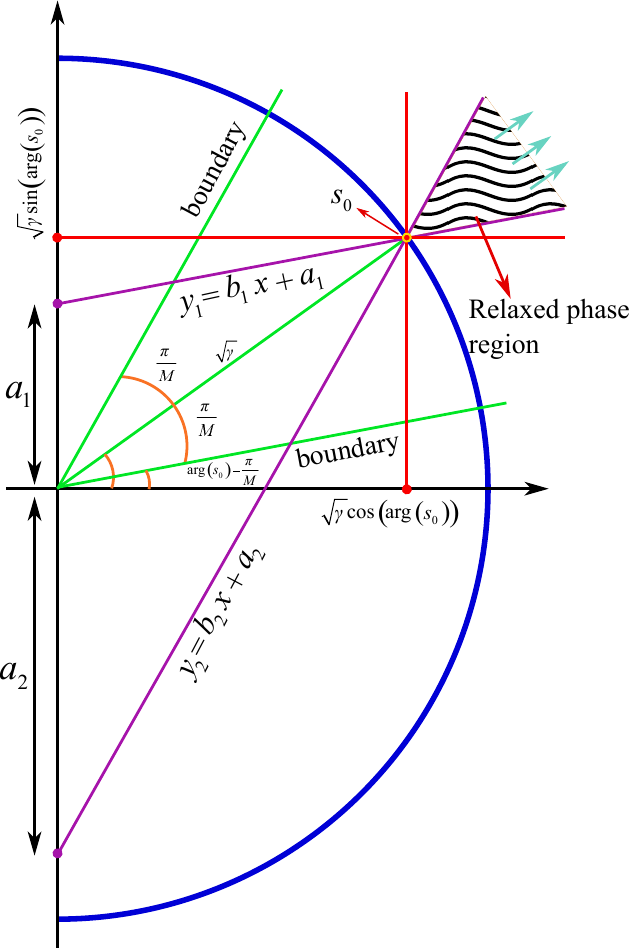}
	\caption{{\rc Relaxed phase characterization of directional modulation design for symbol $s_0$ from $\mathit{M}$-PSK modulation.}}
	\label{fig:sys:mod:relaxed}
\end{figure}
\begin{subequations}
	\begin{align}
	&\mathop {\min }\limits_{\bf{w}} \,\, {\left\| {\bf{w}} \right\|^2}
	\nonumber\\
	& \,\, \text{s.t.}   \,\,\,  {\mathop{\rm Im}\nolimits} \left( {{\bf{h}}_{{n_r}}^T{\bf{w}}{e^{i{\varphi _{{n_r}}}}}} \right) \ge  b_1 \, {\mathop{\rm Re}\nolimits} \left( {{\bf{h}}_{{n_r}}^T{\bf{w}}{e^{i{\varphi _{{n_r}}}}}} \right) + {a_1},
	\label{subeq:relax:1}
	\\
	&    \qquad   {\mathop{\rm Im}\nolimits} \left( {{\bf{h}}_{{n_r}}^T{\bf{w}}{e^{i{\varphi _{{n_r}}}}}} \right) \le b_2 \, {\mathop{\rm Re}\nolimits} \left( {{\bf{h}}_{{n_r}}^T{\bf{w}}{e^{i{\varphi _{{n_r}}}}}} \right) + {a_2},
	\label{subeq:relx:2}
	\end{align}
	\label{eqn:opt:relax:1}%
\end{subequations}
for $r = 1,...,R$ and $n=1,...,N$, where 
\begin{align}
{a_1}& = {c_1} - \sqrt {\left( {{{\cos }^{ - 2}}\left( {{\varphi _{{s_0}}} - \frac{\pi }{M}} \right) - 1} \right)c_2^2},
\nonumber\\
{a_2} &=  - \tan \left( {{\varphi _{{s_0}}} + \frac{\pi }{M}} \right)\left[ {{c_2} - \sqrt {\left( {{{\sin }^{ - 2}}\left( {{\varphi _{{s_0}}} + \frac{\pi }{M}} \right) - 1} \right)c_1^2} } \right],
\nonumber\\
{b_1} &= \tan \left( {{\varphi _{{s_0}}} - \frac{\pi }{M}} \right), \,\, {b_2} = \tan \left( {{\varphi _{{s_0}}} + \frac{\pi }{M}} \right),
\nonumber\\
{c_1} &= \sqrt \gamma  \sin \left( {\arg \left( {{s_0}} \right)} \right), \,\,    {c_2} = \sqrt \gamma  \cos \left( {\arg \left( {{s_0}} \right)} \right),
\label{eqn:def}
\end{align}
and ${\varphi _{{n_r}}} = \arg \left( {{s_{{n_r}}}} \right)$. The value of $\varphi _{{n_r}}$ can be absorbed in the channel to rewrite~\eqref{eqn:opt:relax:1} as
\begin{subequations}
	\begin{align}
	&\mathop {\min }\limits_{\bf{w}} \,\, {\left\| {\bf{w}} \right\|^2}
	\nonumber\\
	& \,\, \text{s.t.}   \,\,\,  {\rm{Im}}\left( {\widetilde {\bf{h}}_{{n_r}}^T{\bf{w}}} \right) \ge {b_1} \, {\rm{Re}}\left( {\widetilde {\bf{h}}_{{n_r}}^T{\bf{w}}} \right) + {a_1},
	\label{subeq:relax:11}
	\\
	&    \qquad  {\rm{Im}}\left( {\widetilde {\bf{h}}_{{n_r}}^T{\bf{w}}} \right) \le {b_2} \, {\rm{Re}}\left( {\widetilde {\bf{h}}_{{n_r}}^T{\bf{w}}} \right) + {a_2}.
	\label{subeq:relx:12}
	\end{align}
	\label{eqn:opt:relax:2}%
\end{subequations}
By stacking the constraints, we can encapsulate~\eqref{eqn:opt:relax:2} as 
\begin{subequations}
	\begin{align}
	&\mathop {\min }\limits_{\bf{w}} \,\, {\left\| {\bf{w}} \right\|^2}
	\nonumber\\
	& \,\, \text{s.t.}   \,\,\,    {\rm{Im}}\left( {\widetilde {\bf{H}}_{{U}}{\bf{w}}} \right) \ge {b_1} \, {\rm{Re}}\left( {\widetilde {\bf{H}}_{{U}}{\bf{w}}} \right) + {a_1} {\bf{1}},
	\label{subeq:relax:31}
	\\
	&    \qquad  {\rm{Im}}\left( {\widetilde {\bf{H}}_{{U}}{\bf{w}}} \right) \le {b_2} \, {\rm{Re}}\left( {\widetilde {\bf{H}}_{{U}}{\bf{w}}} \right) + {a_2} {\bf{1}},
	\label{subeq:relx:32}
	\end{align}
	\label{eqn:opt:relax:3}%
\end{subequations}
where $\bf{1}$ is an $N_U \times 1$ unit vector. We can use the relations developed in~\eqref{eqn:rel:ima:Hw} to transform~\eqref{eqn:opt:relax:3} into
\begin{align}
&\mathop {\min }\limits_{\bf{w}} \,\, {\left\| {\bf{w}} \right\|^2}
\nonumber\\
& \,\, \text{s.t.}   \,\,\,   {{\bf{B}}_1}{\bf{w}} \ge {\bf{a}},
\label{eqn:opt:relax:4}
\end{align}
where
\begin{align}
{{\bf{B}}_1} = \left[ \begin{array}{l}
\widetilde {\bf{H}}_{{U_2}} - {b_1}\widetilde {\bf{H}}_{{U_1}}\\
{b_2}\widetilde {\bf{H}}_{{U_1}} - \widetilde {\bf{H}}_{{U_2}}
\end{array} \right], \,\, {\bf{a}} = \left[ \begin{array}{l}
{a_1}{\bf{1}}\\
- {a_2}{\bf{1}}
\end{array} \right].
	\label{def:B1}
\end{align}
Using a similar approach as in Section~\ref{subsec:lar:tx},~\eqref{eqn:opt:relax:4} can be efficiently solved using the proposed iterative approach or the non-negative least squares formulation. 
}
\subsection{The Case of Strong Eavesdropper ($N_e \ge N_t$)}       
\label{subsec:lar:eve}
In this case, {\rc as the results in Section~\ref{sec:sim} show, $E$ can get a lower SER compared to the $N_e<N_t$ case.} This capability of $E$ comes from the fact that it has more antennas than $T$ and owns global CSI knowledge, which puts $E$ in a superior position compared to $T$ from hardware and CSI knowledge point of view. Nevertheless, there is still one possible way to enhance the security. {\rc Focusing on the signal part and ignoring the noise, we can see from~\eqref{eqn:eve:mim}, for ZF estimator, or~\eqref{eqn:mmse:est:sym}, for MMSE estimator, that ${\bf{\hat w}} = {\bf{w}}$. This means that} the estimated symbols by $E$ are equal to those induced on receiver antennas, {\rc ${{\bf{H}}_U}{\bf{w}}$, for the noiseless case,} therefore, we can design the precoder such that the SNR of the received signal becomes equal to the required level for successful decoding, which is defined by ACM. 

{\rc As the results of the case $N_e < N_t$ in Section~\ref{sec:sim} shows}, the SNR level at $E$ is lower than that of the users, which may prevent successful decoding of the $\mathit{M}$-PSK symbol at $E$. Based on this, we can minimize the sum power of the received signals at the users, ${\left\| {{{\bf{H}}_U}{\bf{w}}} \right\|^2}$, which is the same as the sum power of the estimated signals at $E$. {\rc In this frame, minimizing the sum power of the received signals is equivalent to minimizing the power of received signal on each receiving antenna.} Since the power of the received signal on each receiving antenna is constrained, {\rc minimizing the sum power results in the minimum possible power on each receiving antenna.} This results in a sort of \textit{``security fairness''} among the users. The precoder design problem for the signal level minimization precoder can be defined as
\begin{subequations}
\begin{align}
&\mathop {\min }\limits_{\bf{w}} \,\, {\left\| {{\bf{H}}_U} {\bf{w}} \right\|^2}
\nonumber\\
& \,\, \text{s.t.}   \,\,\,\,  \arg \left( {  {\bf{h}}_{n_r}^T  {\bf{w}}} \right) = \arg \left( {s_{n_r}} \right),                                                                \label{subeq:Hw:c1}
\\
&             \qquad \,\,     {\rm{Re}}\left( {s_{n_r}} \right){\rm{Re}}\left( {  {\bf{h}}_{n_r}^T  {\bf{w}}  } \right) \ge \sqrt \gamma {\rm{R}}{{\rm{e}}^2}\left( {s_{n_r}} \right),  \label{subeq:Hw:c2}
\end{align}
\label{eqn:opt:Hw:1}%
\end{subequations}
for $r = 1,...,R$ and $n=1,...,N$. {\rc Similar as in~\eqref{eqn:opt:pow:1}, the phase of the received signal on each receiving antenna in~\eqref{eqn:opt:Hw:1} is fixed, hence, we need to consider the signal level constraint on the real or imaginary part of the received signal.} Following a similar procedure as in Section~\ref{subsec:lar:tx},~\eqref{eqn:opt:Hw:1} can be transformed to
\begin{align}
&\mathop {\min }\limits_{\bf{w}} \,\, {\left\| {{\bf{H}}_U} {\bf{w}} \right\|^2}
\nonumber\\
& \,\, \text{s.t.}   \,\,\,\,\,     {\bf{A}}{\rm{Re}}\left( {{{\bf{H}}_U}{\bf{w}}} \right) - {\rm{Im}}\left( {{{\bf{H}}_U}{\bf{w}}} \right) = {\bf{0}},    
\nonumber\\
&                    \qquad \,\,  {\mathop{\rm Re}\nolimits} \left( {\bf{S}} \right){\mathop{\rm Re}\nolimits} \left( {{{\bf{H}}_U}{\bf{w}}} \right) \ge \sqrt {\gamma} \,  {{\bf{s}}_{r}},
\label{eqn:opt:Hw:2}
\end{align}
Using~\eqref{eqn:Hw} to~\eqref{eqn:rel:ima:Hw}, we expand ${\left\| {{{\bf{H}}_U}{\bf{w}}} \right\|^2}$ as
\begin{align}
{\left\| {{{\bf{H}}_U}{\bf{w}}} \right\|^2} &= {\widetilde {\bf{w}}^T}{\bf{H}}_{{U_1}}^T{{\bf{H}}_{{U_1}}}\widetilde {\bf{w}} + {\widetilde {\bf{w}}^T}{\bf{H}}_{{U_2}}^T{{\bf{H}}_{{U_2}}}\widetilde {\bf{w}}
\nonumber\\
&= {\widetilde {\bf{w}}^T}\left( {{\bf{H}}_{{U_1}}^T{{\bf{H}}_{{U_1}}} + {\bf{H}}_{{U_2}}^T{{\bf{H}}_{{U_2}}}} \right)\widetilde {\bf{w}},
\label{eqn:Hw:exp}
\end{align}
which along with~\eqref{eqn:rel:ima:Hw} helps us convert~\eqref{eqn:opt:Hw:2} into
\begin{align}
& \mathop {\min }\limits_{\widetilde {\bf{w}}}  \,\,\,\,  {\widetilde {\bf{w}}^T}\left( {{\bf{H}}_{{U_1}}^T{{\bf{H}}_{{U_1}}} + {\bf{H}}_{{U_2}}^T{{\bf{H}}_{{U_2}}}} \right)\widetilde {\bf{w}}
\nonumber\\
& \, \text{s.t.}   \,\,\,\,\,\,\,  \left( {{\bf{A}}{{\bf{H}}_{{U_1}}} - {{\bf{H}}_{{U_2}}}} \right)\widetilde {\bf{w}} =\bf{0},
\nonumber\\
&                  \qquad \,\,\,\,   {\mathop{\rm Re}\nolimits} \left( {\bf{S}} \right){{\bf{H}}_{{U_1}}}\widetilde {\bf{w}} \ge \sqrt {\gamma} \,  {\rc {{\bf{s}}_r}}.
\nonumber\\
\label{eqn:opt:Hw:3}
\end{align}
For~\eqref{eqn:opt:Hw:3} to be feasible, $\widetilde {\bf{w}}$ has to be in the null space of ${{\bf{A}}{{\bf{H}}_{{U_1}}} - {{\bf{H}}_{{U_2}}}}$. Hence, we can write $\widetilde {\bf{w}}$ as a linear combination of the null space basis of ${{\bf{A}}{{\bf{H}}_{{U_1}}} - {{\bf{H}}_{{U_2}}}}$ {\rc yielding} $\widetilde {\bf{w}} = {\bf{E}} \boldsymbol \lambda$, where ${\bf{E}}$ and $\boldsymbol{\lambda}$ are as in~\eqref{eqn:E:lam}. This way,~\eqref{eqn:opt:Hw:3} boils down to\footnote{The design problem~\eqref{eqn:opt:Hw:4} can be extended to M-QAM modulation by changing the constraint into equality. A detailed derivation falls beyond the scope of this paper.} 
\begin{align}
&\mathop {\min }\limits_{\boldsymbol{\lambda}} \,\, {\boldsymbol{\lambda} ^T}{{\bf{E}}^T}\left( {{\bf{H}}_{{U_1}}^T{{\bf{H}}_{{U_1}}} + {\bf{H}}_{{U_2}}^T{{\bf{H}}_{{U_2}}}} \right){\bf{E}} \boldsymbol{\lambda}
\nonumber\\
& \, \text{s.t.}   \,\,\,\,\,\,  {{\bf{B} \boldsymbol{\lambda} }} \ge \sqrt \gamma {\mkern 1mu} {\rc {{\bf{s}}_r}},
\label{eqn:opt:Hw:4}
\end{align}
{\rc where ${\bf{B}} = {\rm{Re}}\left( {\bf{S}} \right){{\bf{H}}_{{U_1}}}{\bf{E}}$}. Similar as in Section~\ref{subsec:lar:tx}, in the following, we propose an iterative algorithm and non-negative least squares formulation to solve~\eqref{eqn:opt:Hw:4}.
\vspace{3pt}
\subsubsection{Iterative solution} 
\label{subsubsec:lar:eve:sol:a}
By introducing the new variable $\bf{u}$, we can rewrite~\eqref{eqn:opt:Hw:4} as  
\begin{align}
& \mathop {\min }\limits_{{\boldsymbol{\lambda }},{\bf{u}}}  \,\, {\boldsymbol{\lambda} ^T}{{\bf{E}}^T}\left( {{\bf{H}}_{{U_1}}^T{{\bf{H}}_{{U_1}}} + {\bf{H}}_{{U_2}}^T{{\bf{H}}_{{U_2}}}} \right){\bf{E}} \boldsymbol{\lambda}
\nonumber\\
& \, \text{s.t.}   \,\,\,\,\,\,  {{\bf{B} \boldsymbol{\lambda} }} = \sqrt \gamma  {\rc {{\bf{s}}_r}} + \bf{u}.
\label{eqn:opt:Hw:5}
\end{align}
We can adapt Algorithm~\ref{alg:itr} to solve~\eqref{eqn:opt:Hw:4} by replacing the solution to ${\boldsymbol \lambda ^\star}$ as 
\begin{align}
{\boldsymbol{\lambda} ^\star}  = {\left( {\frac{{{{\bf{E}}^T}\left( {{\bf{H}}_{{U_1}}^T{{\bf{H}}_{{U_1}}} + {\bf{H}}_{{U_2}}^T{{\bf{H}}_{{U_2}}}} \right){\bf{E}}}}{\eta } + {{\bf{B}}^T}{\bf{B}}} \right)^{ - 1}}{{\bf{B}}^T}\left( {{\bf{a}} + {\bf{u}}} \right),
\label{eqn:opt:Hw:lam:star}
\end{align}
which is derived using a similar procedure as in Section~\ref{subsubsec:lar:tx:sol:a}. Similar as in~\eqref{eqn:opt:itr:lam:cf}, the matrix inversion in~\eqref{eqn:opt:Hw:lam:star} needs to be calculated only once per symbol transmission.
\vspace{3pt}
\subsubsection{Non-negative least squares} 
\label{subsubsec:lar:eve:sol:b}
Assuming that ${{\bf{H}}_{{U_1}}}$ and ${{\bf{H}}_{{U_2}}}$ are non-singular, the matrix ${{\bf{E}}^T}\left( {{\bf{H}}_{{U_1}}^T{{\bf{H}}_{{U_1}}} + {\bf{H}}_{{U_2}}^T{{\bf{H}}_{{U_2}}}} \right){\bf{E}}$ is positive definite, hence, its Cholesky decomposition ${{\bf{E}}^T}\left( {{\bf{H}}_{{U_1}}^T{{\bf{H}}_{{U_1}}} + {\bf{H}}_{{U_2}}^T{{\bf{H}}_{{U_2}}}} \right){\bf{E}} = {\bf{L}}{{\bf{L}}^T}$ exists and can be used in order to rewrite~\eqref{eqn:opt:Hw:5} as
\begin{align}
& \mathop {\min }\limits_{{\boldsymbol{\lambda }},{\bf{u}}} \,\, {\left\| {{{\bf{L}}^T}{\bf{\boldsymbol\lambda }}} \right\|^2}
\nonumber\\
& \, \text{s.t.}   \,\,\,\,\,\,  {{\bf{B} \boldsymbol{\lambda} }} = \sqrt \gamma  {\rc {{\bf{s}}_r}} + \bf{u}.
\label{eqn:opt:Hw:6}
\end{align}
We can derive $\boldsymbol{\lambda}$ using the constraint of~\eqref{eqn:opt:Hw:6} as ${\boldsymbol{\lambda }} = {{\bf{B}}^\dag }\left( {\sqrt \gamma  {\rc {{\bf{s}}_r}} + {\bf{u}}} \right)$ and replace it back into the objective of~\eqref{eqn:opt:Hw:6} to get
\begin{align}
& \mathop {\min }\limits_{\bf{u}} \,\, {\left\| {{{\bf{L}}^T}{{\bf{B}}^\dag }{\bf{u}} + {{\bf{L}}^T}{{\bf{B}}^\dag }\sqrt \gamma  {\rc {{\bf{s}}_r}}} \right\|^2}
\nonumber\\
& \, \text{s.t.}   \,\,\,\,\,\,   \bf{u} \ge 0,
\label{eqn:opt:Hw:7}
\end{align}
which is a non-negative least squares optimization problem. Since ${{{\bf{L}}^T}{{\bf{B}}^\dag }}$ and ${{{\bf{L}}^T}{{\bf{B}}^\dag }\sqrt \gamma  {\rc {{\bf{s}}_r}}}$ are real valued, we can use~\cite{Solving:1995,Bro1997b} to solve~\eqref{eqn:opt:Hw:7} in an efficient way.
\subsection{{\rc Benchmark Scheme}} 
\label{sub:bm}
We consider the ZF at the transmitter~\cite{Lai-U:2004} as the benchmark scheme since both our design and the benchmark scheme use the CSI knowledge at the transmitter to design the precoder.

In the benchmark scheme, ZF precoder is applied at the transmitter to remove the interference among the symbol streams. The received signals at users and $E$ in the benchmark scheme are
\begin{align}
&{{\bf{y}}_U^\Q} = {{\bf{H}}_U}{\bf{W}}^{} \bf{s} \beta + {{\bf{n}}_U^\Q}, 
\\
&{{\bf{y}}_E^\Q} = {{\bf{H}}_E}{\bf{W}}^{} \bf{s} \beta + {{\bf{n}}_E^\Q},
\label{eqn:rec:sig:zf}
\end{align}
where ${\bf{W}}^{} = {\bf{H}}_U^H{\left( {{\bf{H}}_U{\bf{H}}_U^H} \right)^{ - 1}}$ is the precoding vector, $\bf{s}$ contains the symbols, and $\beta$ is the amplification factor for the symbols which acts similar as $\sqrt \gamma$ in the directional modulation scheme. For a fair comparison, we pick up the same values for $\sqrt \gamma$ and $\beta$ in the simulations. 

When using the benchmark, $E$ can {\rc use ZF and MMSE as two possible ways to estimate the symbols. In contrast to our method} $E$ can use the knowledge of ${\bf{H}}_U$ to calculate $\bf{W}$ {\rc in the benchmark scheme.}

In the {\rc ZF approach}, given that $N_e \ge N_t$, $E$ can estimate ${\bf{s}}\beta$ as
\begin{align}
{\widehat {{\bf{s}}\beta }} &= {\left[ {{{\left( {{{\bf{H}}_E}{\bf{W^{}}}} \right)}^H}{{\bf{H}}_E}{\bf{W^{}}}} \right]^{ - 1}}{\left( {{{\bf{H}}_E}{\bf{W^{}}}} \right)^H}{{\bf{y}}_E^\Q}
\nonumber\\
&= {\bf{s}}\beta  + {\left[ {{{\left( {{{\bf{H}}_E}{\bf{W^{}}}} \right)}^H}{{\bf{H}}_E}{\bf{W^{}}}} \right]^{ - 1}}{\left( {{{\bf{H}}_E}{\bf{W^{}}}} \right)^H}{{\bf{n}}_E^\Q}
\label{eqn:eve:2nd}
\end{align}
where ${\widehat {{\bf{s}}\beta }}$ is the estimated ${\bf{s}}\beta$ at $E$. Since ${{\bf{H}}_E}{\bf{W}}$ is $N_e \times N_{U}$, ${\left[ {{{\left( {{{\bf{H}}_E}{\bf{W}}} \right)}^H}{{\bf{H}}_E}{\bf{W}}} \right]^{ - 1}}{\left( {{{\bf{H}}_E}{\bf{W}}} \right)^H}{{\bf{H}}_E}{\bf{W}}=\bf{I}$ for $N_e \ge N_{U}$. Hence, in the benchmark scheme, $E$ can derive the precoder and estimate the symbols {\rc using the ZF method} when $N_e \ge N_{U}$. On the other hand, since our designed precoder depends on both the channels and symbols, $E$ cannot derive the precoder and estimate the symbols {\rc using the ZF method} when $N_e \ge N_{U}$. 

{\rc  In the MMSE approach, $E$ can estimate ${\bf{s}}\beta$ as
\begin{align}
& {\widehat {{\bf{s}}\beta }} = {{\bf{G}}_3}{{\bf{y}}_E^\Q},
\label{eqn:mmse:est:bm}
\end{align}
where
\begin{align}
{{\bf{G}}_3} = {\left[ {{{\left( {{{\bf{H}}_E}{\bf{W}}} \right)}^H}{\bf{C}}_{\bf{w}}^{ - 1}{{\bf{H}}_E}{\bf{W}} + {\bf{C}}_{{{\bf{N}}_E}}^{ - 1}} \right]^{ - 1}}{\left( {{{\bf{H}}_E}{\bf{W}}} \right)^H}{\bf{C}}_{\bf{w}}^{ - 1}.
\label{eqn:mmse:mat:bm}
\end{align}}%
{\rc When using the benchmark method, we will see in Section~\ref{sec:sim} that SER at $E$ when using the MMSE method depends on the difference between $N_e$ and $N_U$, while the SER at $E$ depends on the difference between $N_e$ and $N_t$ in our method.} Broadly speaking, the base station has usually more antennas than the users, hence, {\rc it is more likely to have a higher difference between $N_e$ and $N_t$ rather than $N_e$ and $N_U$}, especially with a large-scale array. Therefore, it is more probable to preserve the security in our design compared to the benchmark scheme. Furthermore, by comparing~\eqref{eqn:eve:mim} {\rc and~\eqref{eqn:mmse:est:sym} with}~\eqref{eqn:eve:2nd}, we see that $E$ has to multiply ${\widehat {\bf{W}}}$ by ${{\bf{H}}_U}$ in our design whereas $E$ does need to do this in the benchmark scheme.  
{\rc \section{Remarks on Computational Complexity} 
\label{sec:complex}
In this part, we analyze the computational complexity of our method and the benchmark scheme assuming that we pick up the non-negative formulation approach to design our precoder. The computational complexity of the non-negative least squares approach when using the interior point,~\eqref{eqn:com:comp:ip}, and fast projected gradient algorithms,~\eqref{eqn:com:comp:fpg}, are, respectively, as~\cite{FGP:2015}
\begin{align}
&O\left( {N_t^3\ln {\varepsilon ^{ - 1}}} \right),
\label{eqn:com:comp:ip}
\\
&O\left( {{\lambda_0 ^{\frac{1}{2}}}\left\| {{{\bf{w}}_0} - {{\bf{w}}^\star}} \right\|N_t^2{\varepsilon ^{ - \frac{1}{2}}}} \right),
\label{eqn:com:comp:fpg}
\end{align}
where $\varepsilon$ is the upper bound on the difference between the current, $f\left( {{{\bf{w}}_{itr}}} \right)$, and the optimal value, $f\left( {{{\bf{w}}^\star}} \right)$, of the objective function as $f\left( {{{\bf{w}}_{itr}}} \right) - f\left( {{{\bf{w}}^\star}} \right) \le \varepsilon$, ${\lambda _0} = {\lambda _{\max }}\left( {{{\bf{D}}^T}{\bf{D}}} \right)$ with ${\bf{D}} = {{\bf{B}}^\dag }$ for~\eqref{eqn:opt:nnls}, ${\bf{D}} = {{\bf{B}}_1^\dag }$ for~\eqref{eqn:opt:relax:4}, and ${\bf{D}} = {{\bf{L}}^T}{{\bf{B}}^\dag }$ for~\eqref{eqn:opt:Hw:7}.

Next, we derive the computational complexity of the benchmark scheme. Considering the structure of $\bf{W}$, the complexity of the benchmark scheme is derived as
\begin{align}
2O\left( {{N_t}N_{_U}^2} \right) + O\left( {N_U^3} \right) + O\left( {{N_t}{N_U}} \right).
\label{eqn:bm:com}
\end{align}

Each of the problems in~\eqref{eqn:opt:nnls},~\eqref{eqn:opt:relax:4}, and~\eqref{eqn:opt:Hw:7}, need to be solved once per group of symbols communications. In other words, $N_U$ symbols can be communicated for each designed precoder. Therefore, a higher $N_U$ means that more symbols can be communicated to the users for each designed precoder. On the other hand, the designed precoder in the benchmark scheme can be used as far as the channel is fixed. Hence, the computational complexity comparison between our scheme and the benchmark method depends on the channel changing rate, the total number of users' antennas, and the required accuracy in the non-negative least squares solution in~\eqref{eqn:com:comp:ip} or~\eqref{eqn:com:comp:fpg}.   
}
\section{Simulation Results}   
\label{sec:sim}
In this part, we present different simulation scenarios to analyze the security and the performance of the directional modulation scheme for different precoding designs, and compare them with a benchmark scheme. In all simulations, channels are considered to be quasi static block Rayleigh which are generated using i.i.d. complex Gaussian random variables with distribution $\mathcal{CN}(0 , 1 )$ and remain fixed during the interval that the $\mathit{M}$-PSK symbols are being induced at the receiver. Also, the noise is generated using i.i.d. complex Gaussian random variables with distribution $\mathcal{CN}(0 , {\sigma ^2} )$, and the modulation order used in all of the scenarios is $8$-PSK modulation. Here, we simulate each precoder for both strong transmitter, $N_e <N_t$, and strong eavesdropper, $N_e \ge N_t$, cases. This way, we show the benefit of the power minimizer precoder in the strong transmitter case and the signal level minimizer precoder in the strong eavesdropper case. We use the acronym ``min'' instead of minimization in the legend of the figures. {\rc Unless otherwise mentioned, the power minimization precoder used in the scenario is the one with fixed phase. Here, the SER at $E$ is derived by assuming that $E$ decodes the symbols of all users.}

In all the experiments, the computation times of the iterative method and non-negative least squares were considerably lower than the computation time of CVX. For example, in the case $N_t=20$ and $N_U=20$, while the average required time for the iterative method and non-negative least squares was $173.4$ and $10.5$ milliseconds, respectively, the same task was accomplished by CVX in $999.3$ milliseconds. 

In the first scenario, the effect of {\rc the} number of transmitter antennas, $N_t$, on transmitter's consumed power and the SER at users and $E$ are investigated for {\rc power minimization, fixed and relaxed phase, and signal level minimization precoders} in~\eqref{eqn:opt:pow:1},{\rc ~\eqref{eqn:opt:relax:4}}, and~\eqref{eqn:opt:Hw:1}, and the benchmark scheme. The average consumed power, ${\left\| {\bf{w}} \right\|^2}$, with respect to $N_t$ is shown in Fig.~\ref{fig:pow:ant} for $N_{U}=8, 10$. As $N_t$ increases, the power consumption of our design with power minimization {\rc precoders, fixed and relaxed phase, converge} to that of other two schemes. The power consumed by power minimization {\rc precoders with fixed and relaxed phase have the largest difference with the other two schemes, almost $6$ and $8$ dB}, for $N_t=N_{U}$. {\rc We see that power minimization precoder with relaxed phase has $2.5$ dB less power consumption compared to the power minimization precoder with fixed phase}. The signal level minimization precoder has almost the same power consumption as the benchmark scheme {\rc for $N_t=N_U=10$}. When the difference between $N_t$ and $N_{U}$ increases, all {\rc four} schemes consume considerably less power. {\rc When $N_t$ is larger than $N_U$, the degrees of freedom of the signal level minimization design increases and the power consumed by the signal level minimization precoder approaches that of the power minimization precoder.}

The average total SER at users and the average SER at $E$ with respect to $N_t$ are presented in {\rc Figures~\ref{fig:ser:ant1} and~\ref{fig:ser:ant2} where the eavesdropper uses ZF and MMSE to estimate the symbols}. Our designed precoders, power and signal level minimization, cause considerably more SER at $E$ compared to the benchmark scheme for a long range of $N_t$. Furthermore, as $N_e$ increases, there are cases, e.g., $N_t=16$, that the error caused at $E$ by the benchmark scheme decreases while the error caused by our designed precoders remains almost fixed {\rc when $E$ used the ZF estimator and reduces slightly when $E$ uses the MMSE estimator}. As Fig.~\ref{fig:rec:pow::ant} shows, our design with signal level minimization precoder and the benchmark scheme keep users' signal level norm constant. This leads into a constant SNR at $E$. 

{\rc We see in Figures~\ref{fig:ser:ant1} and~\ref{fig:ser:ant2} that the MMSE estimator results in a less SER at the eavesdropper compared to the ZF estimator when the difference between $N_t$ and $N_U$ increases. On the other hand, for close values of $N_t$ and $N_U$, the MMSE approach leads into the same SER as the ZF approach. Although the MMSE estimator reduces the SER at the eavesdropper, the error at the eavesdropper is still much higher than the users. For example, in Fig.~\ref{fig:ser:ant1}, the SER at the eavesdropper is $0.2$ while the SER at the users is $10^{-3}$. We see in Fig.~\ref{fig:ser:ant2} that for $N_t=N_U=10$, the eavesdropper can reduce the SER more in the benchmark scheme compared to our method.} 
Since the directional modulation with signal level minimization imposes more error on $E$ and consumes the same power as the benchmark scheme, it is the preferable choice for secure communication when $N_e \ge N_t$. Comparing Fig.~\ref{fig:pow:ant} {\rc with Figures~\ref{fig:ser:ant1} and~\ref{fig:ser:ant2}} shows that when the difference between $N_t$ and $N_{U}$ goes above a specific amount, the power and signal level minimization precoders converge in both power consumption and the SER at $E$ and users. 
\begin{figure}[t]
	\centering
	\includegraphics[width=9cm]{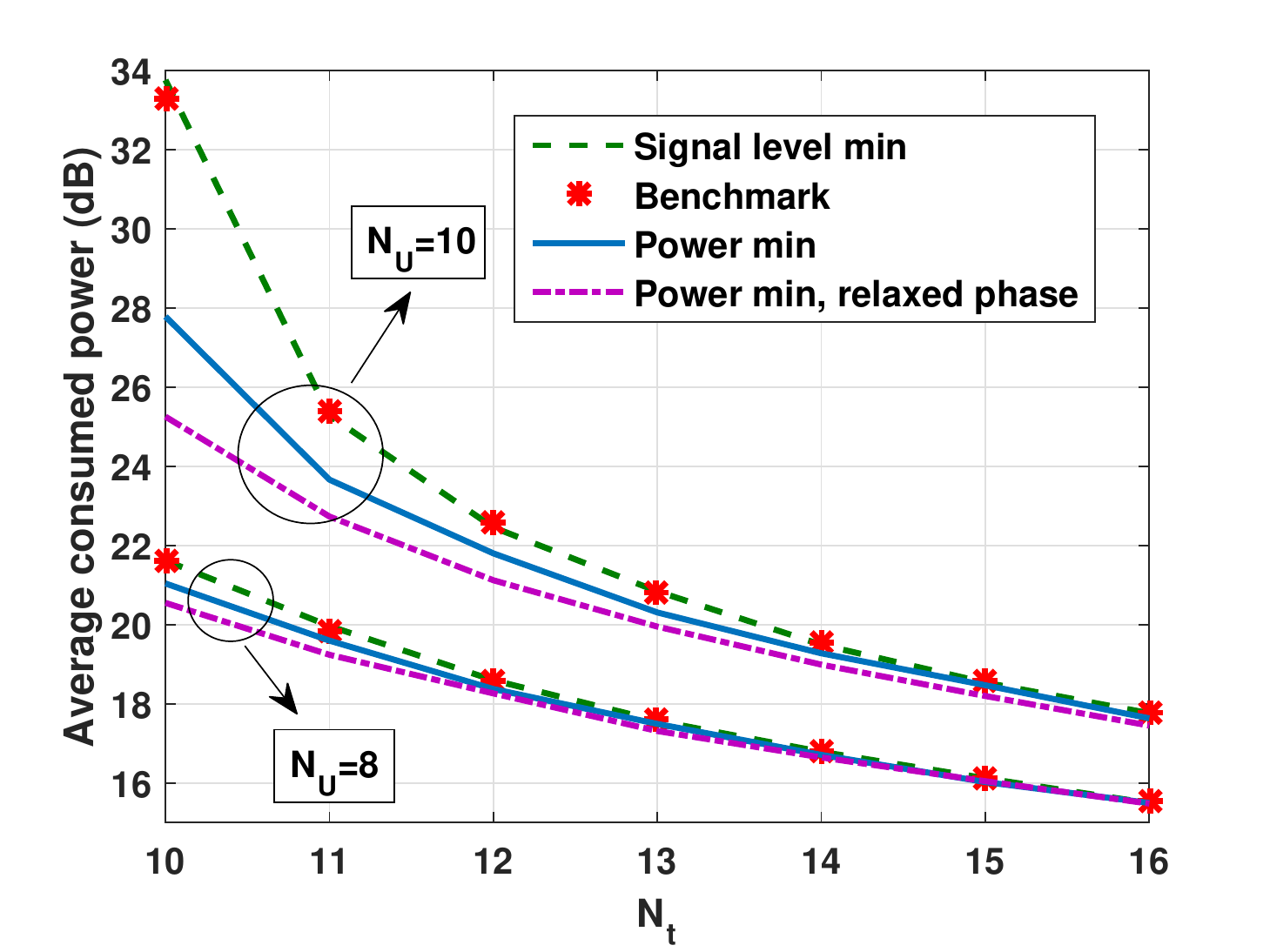}
	\caption{Average consumed power with respect to $N_t$ for our designed precoders and the benchmark scheme when $\gamma=15.56$ dB and $\beta^2=15.56$ dB.}  
	\label{fig:pow:ant}
\end{figure}
\begin{figure}[t]
	\centering
	\includegraphics[width=9cm]{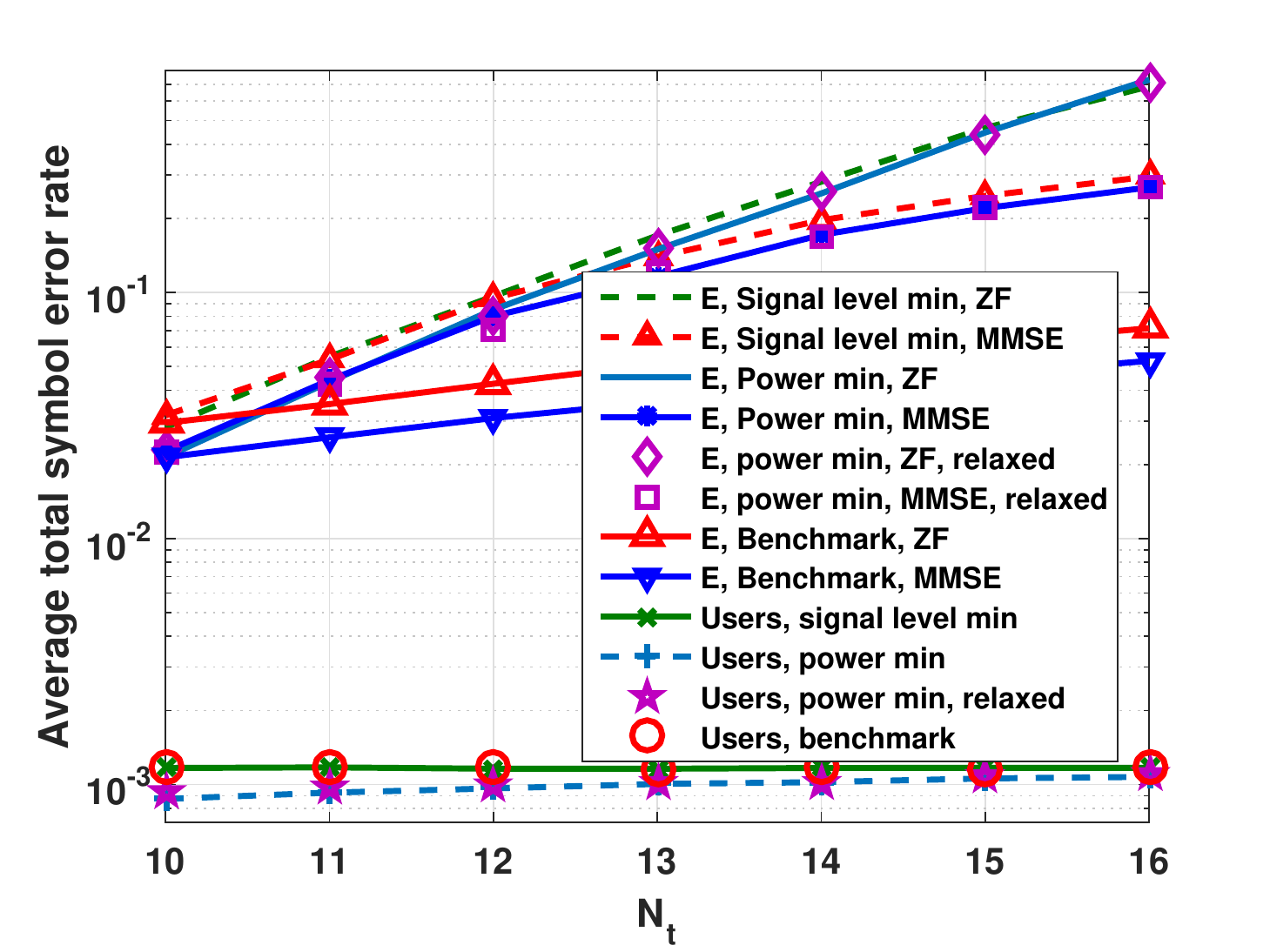}
	\caption{Average total SER at the users and average SER at $E$ with respect to $N_t$ for our designed precoders and the benchmark scheme when $N_{U}=10$, {\rc $N_e=15$}, $\gamma=15.56$ dB, and $\beta^2=15.56$ dB.}
	\label{fig:ser:ant1}
\end{figure}
\begin{figure}[t]
	\centering
	\includegraphics[width=9cm]{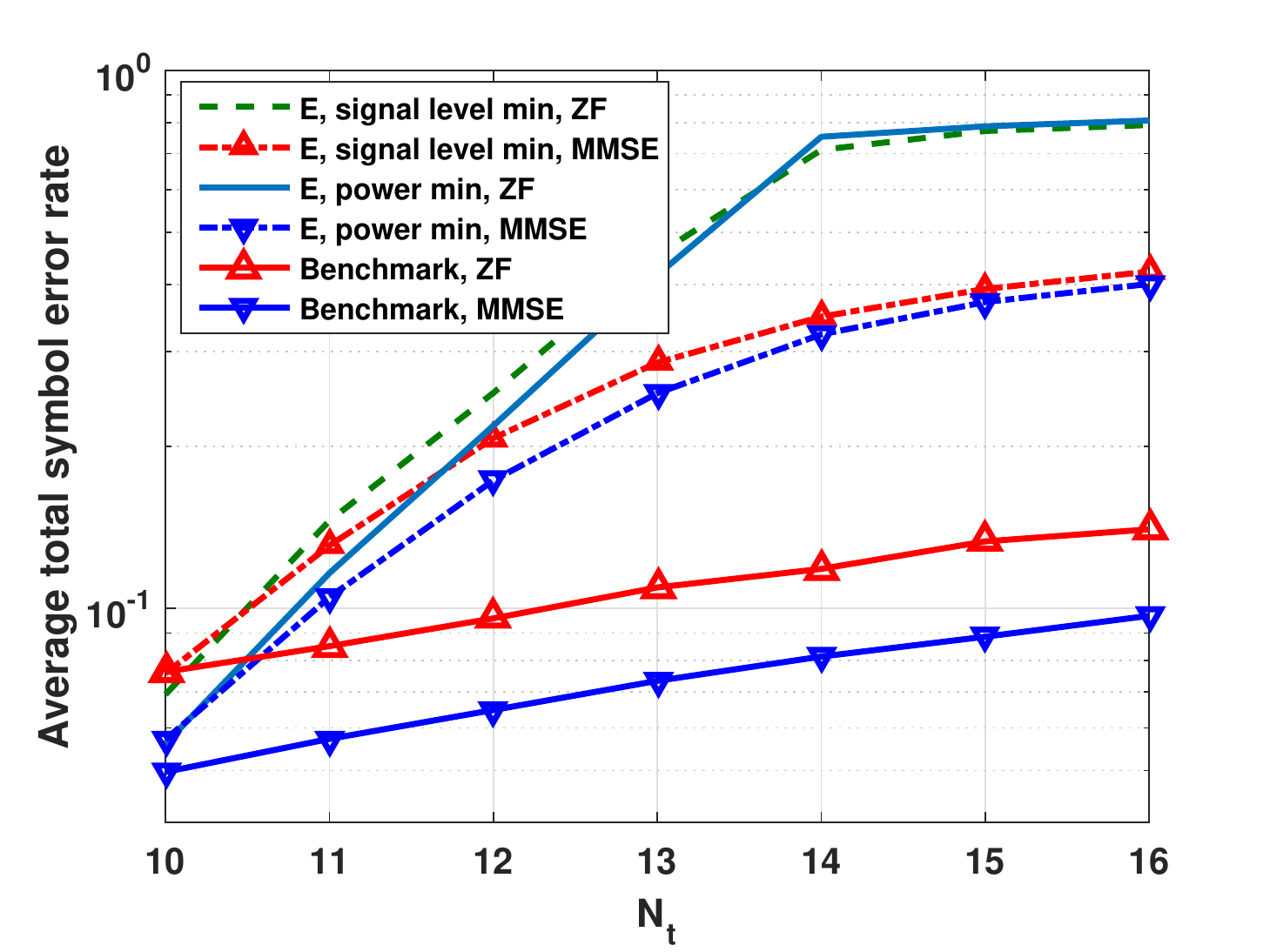}
	\caption{{\rc Average SER at $E$ with respect to $N_t$ for our designed precoders and the benchmark scheme when $N_{U}=10$, $N_e=13$, $\gamma=15.56$ dB, and $\beta^2=15.56$ dB.}}
	\label{fig:ser:ant2}
\end{figure}
\begin{figure}[t]
	\centering
	\includegraphics[width=9cm]{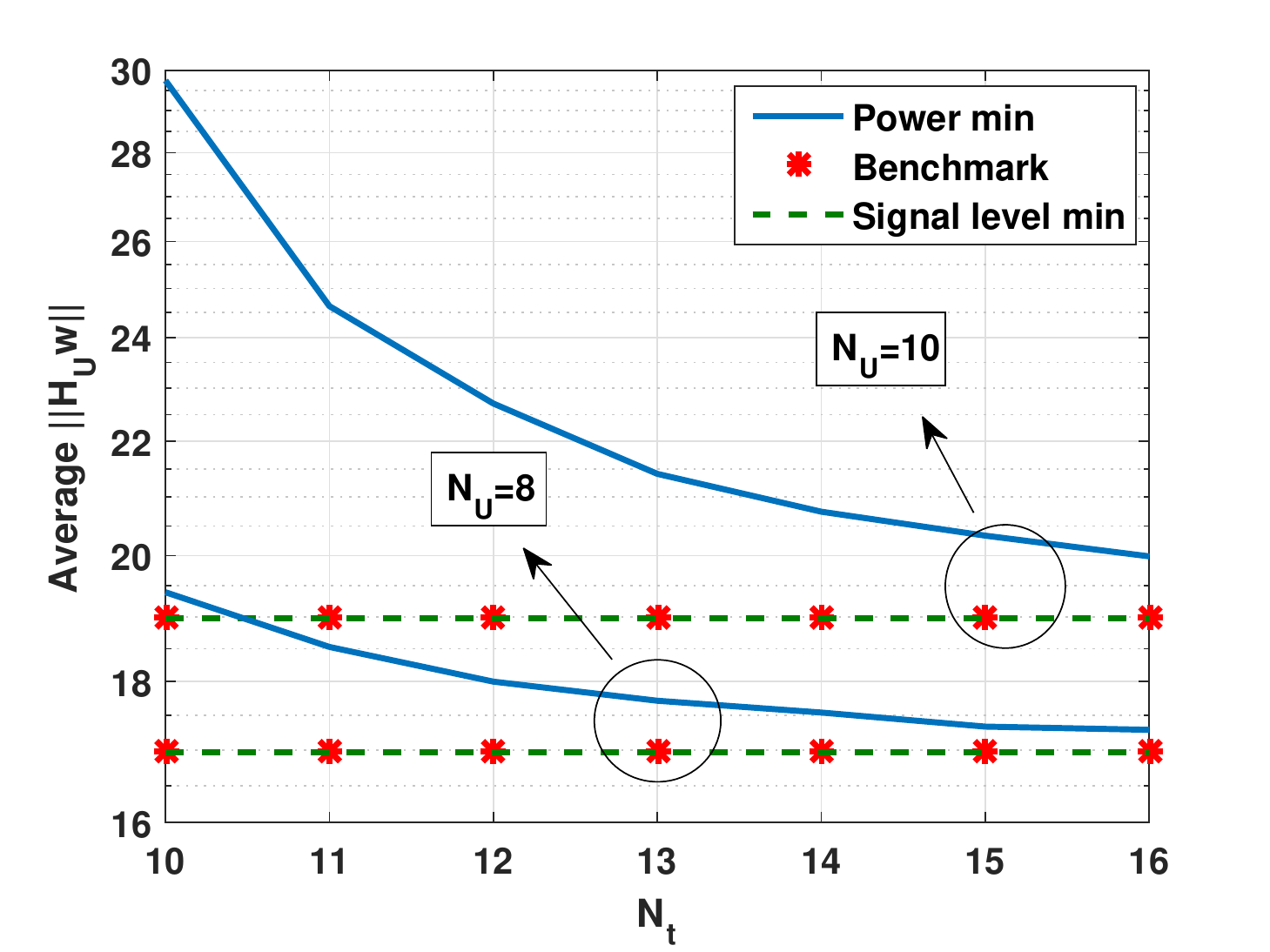}
	\caption{Average $\left\| {{{\bf{H}}_U}{\bf{w}}} \right\|$ for our designed precoders and the benchmark scheme when $\gamma=15.56$ dB, and $\beta^2=15.56$ dB.} 
	\label{fig:rec:pow::ant}
\end{figure}
\begin{figure}[]
	\centering
	\includegraphics[width=9cm]{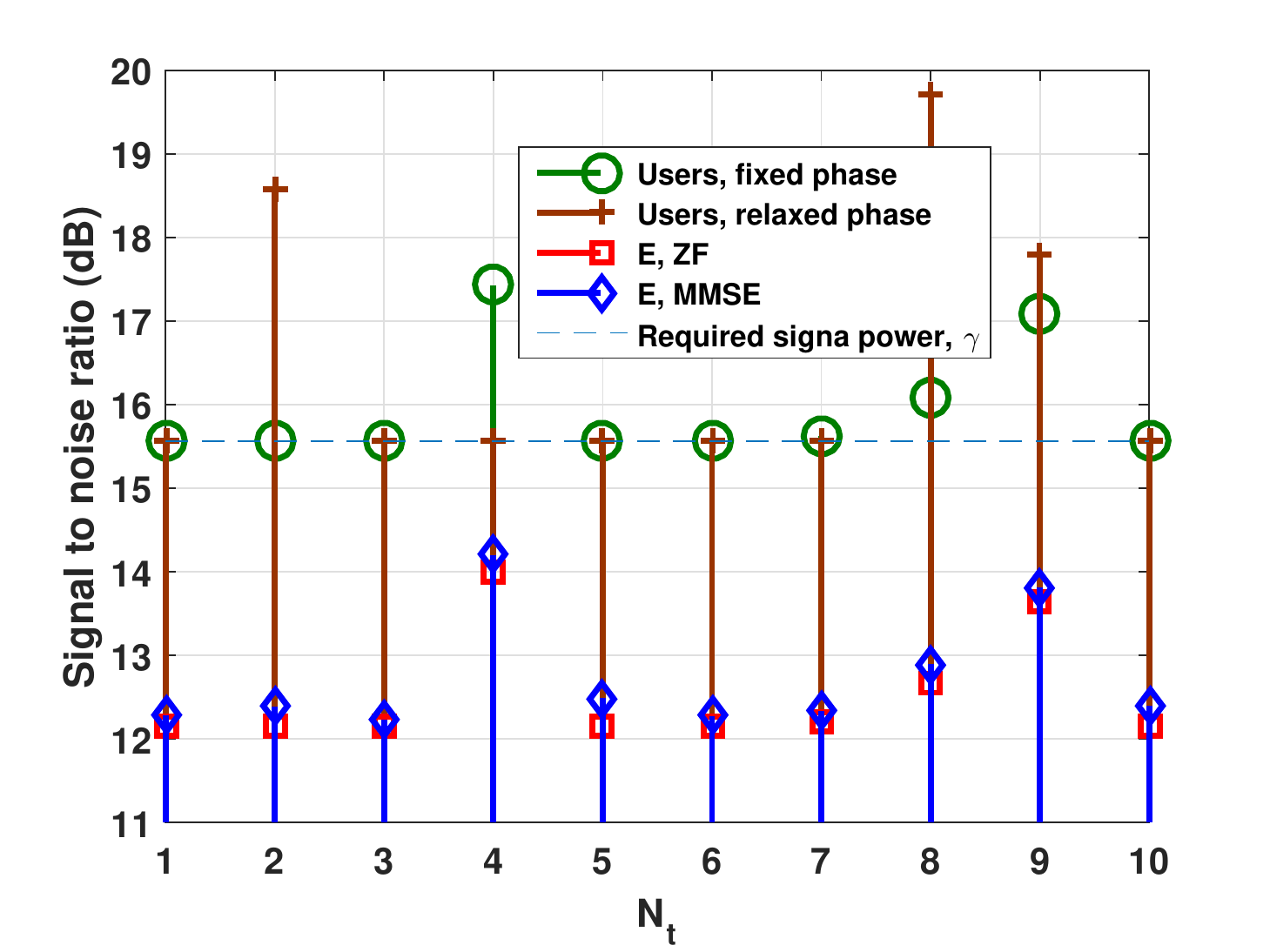}
	\caption{{\rc Instantaneous symbol power to average noise power for power minimization precoder with fixed and relaxed phase designs when $N_t=11$, $N_{U}=10$, $N_e=16$ and $ \gamma=15.56$ dB.}}
	\label{fig:snr:ins:1}
\end{figure}
\begin{figure}[]
	\centering
	\includegraphics[width=9cm]{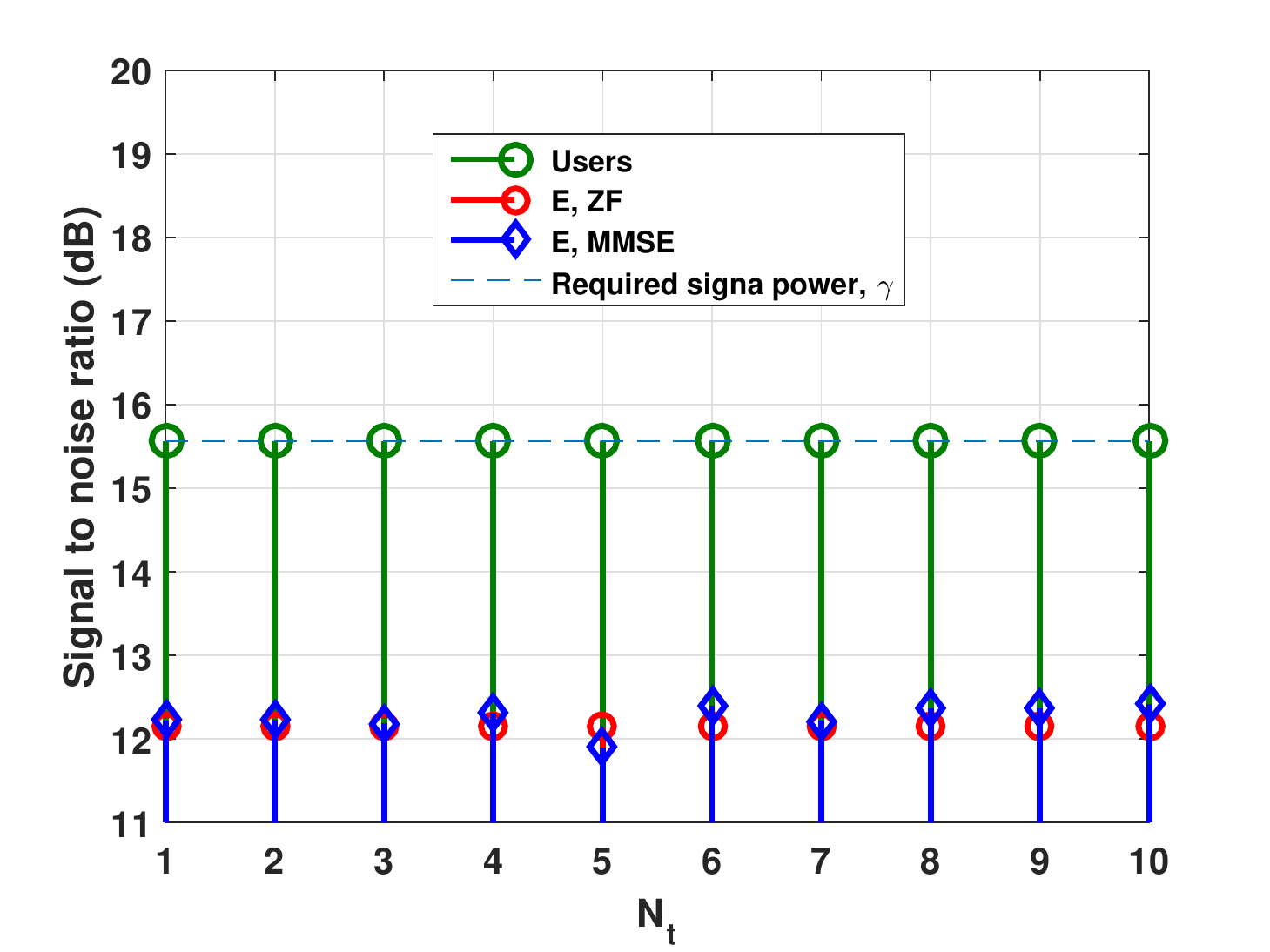}
	\caption{\rc Instantaneous symbol power to average noise power for signal level minimization precoder when $N_t=11$, $N_{U}=10$, $N_e=16$ and $\gamma=15.56$ dB.}
	\label{fig:snr:ins:2}
\end{figure}

The instantaneous power of the induced symbols to average noise power is shown in {\rc Figures~\ref{fig:snr:ins:1} and~\ref{fig:snr:ins:2} for power, fixed and relaxed phase,} and signal level minimization precoders when $N_e > N_t$. As we see, even with $E$ being able to estimate the symbols, the SNR at $E$ is lower than the users. {\rc This shows that the processes carried out at $E$ to perform ZF and MMSE estimations of $\bf{w}$ cause the SNR to be less than that of the users.} {\rc As Fig.~\ref{fig:snr:ins:2} shows, the signal level minimization precoder keeps the SNR at the users and $E$ at the lowest possible level. The SNR at the users is on the required threshold for decoding while the SNR at $E$ is much lower than that of the users and below the required threshold for successful decoding}, which imposes the maximum SER on $E$.

In the second scenario, $T$'s average power consumption, total average SER at the users, and average SER at $E$ are plotted with respect to total receiving antennas, $N_{U}$. Fig.~\ref{fig:pow:user:ant} shows the average consumed power with respect to $N_{U}$. Increasing $N_{U}$ decreases the degrees of freedom and increases the power consumption. As $N_{U}$ approaches $N_t$, the difference between the power consumed by the power minimization precoder and the other two schemes increases. 

We investigate the effect of $N_{U}$ on average total SER at the users and the average SER at $E$ {\rc in Figures~\ref{fig:ser:ant:user1} and~\ref{fig:ser:ant:user2}}. As $N_{U}$ increases, the SNR provided by the power minimization precoder goes more above the threshold. This reduces the average SER at both users and $E$. On the other hand, regardless of difference between $N_t$ and $N_{U}$, our design with signal level minimization precoder always preserves the SER at $E$ in the maximum value. {\rc Compared to the ZF estimator, when our precoders are used, the MMSE approach reduces the SER at $E$ for close values of $N_U$ and $N_t$. As $N_U$ approaches $N_t$, the performance of ZF and MMSE techniques get closer. As shown in Fig.~\ref{fig:ser:ant:user2}, the MMSE estimator at $E$ reduces the SER more compared to the ZF estimator when the signal level min precoder is used.} When $N_e > N_{U}$, our design imposes more SER at $E$ compared to the benchmark scheme since $N_e \ge N_{U}$ is required for $E$ to estimate the symbols in the benchmark scheme. As $N_{U}$ approaches $N_t$, the SER imposed on $E$ by the signal level minimization precoder and the benchmark scheme get closer.    
\begin{figure}[]
	\centering
	\includegraphics[width=9cm]{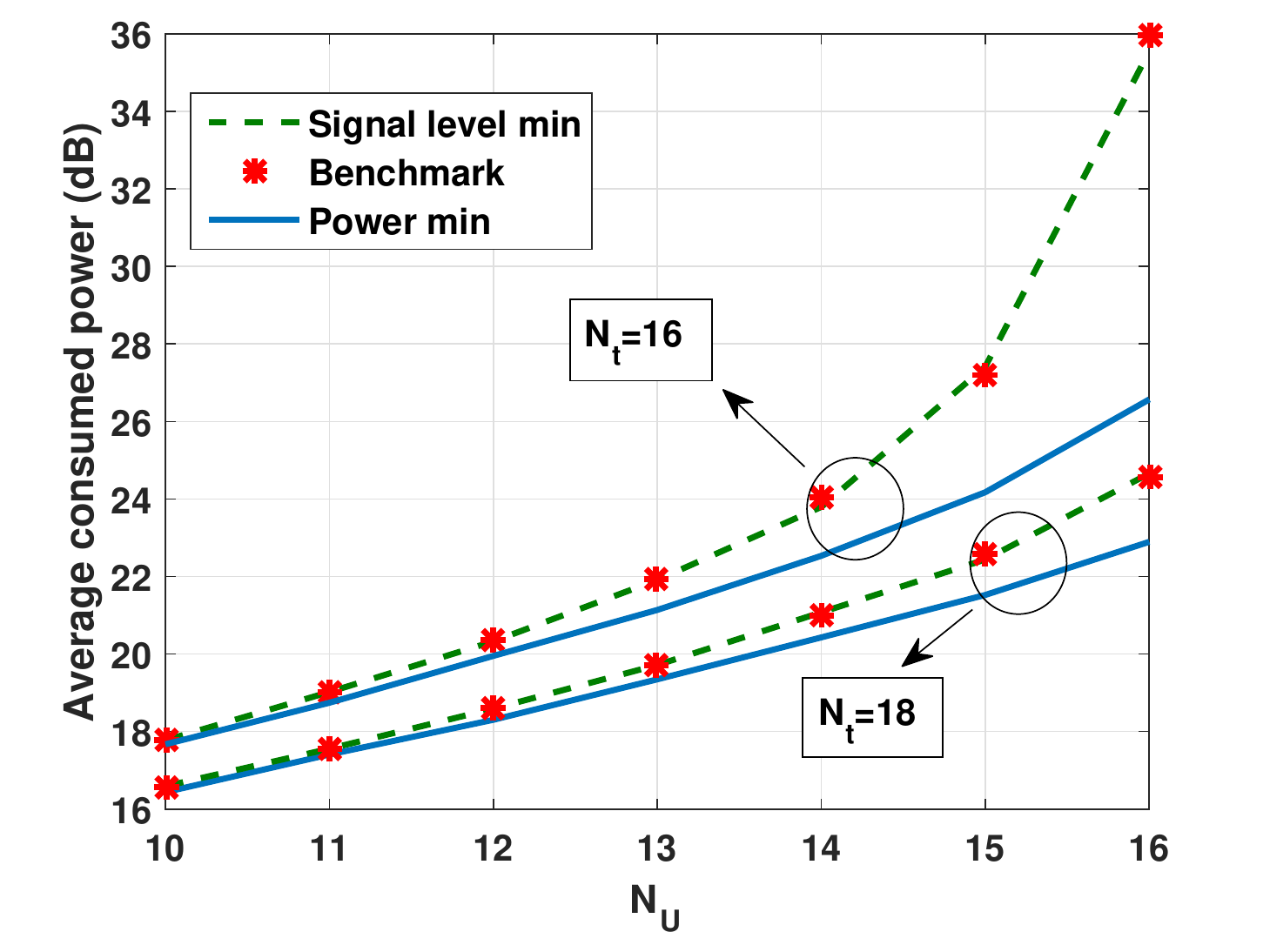}
	\caption{Average consumed power with respect to $N_{U}$ for our designed precoders and the benchmark scheme when $\gamma=15.56$~dB, and $\beta^2=15.56$~dB.}
	\label{fig:pow:user:ant}
\end{figure}
\begin{figure}[]
	\centering
	\includegraphics[width=9cm]{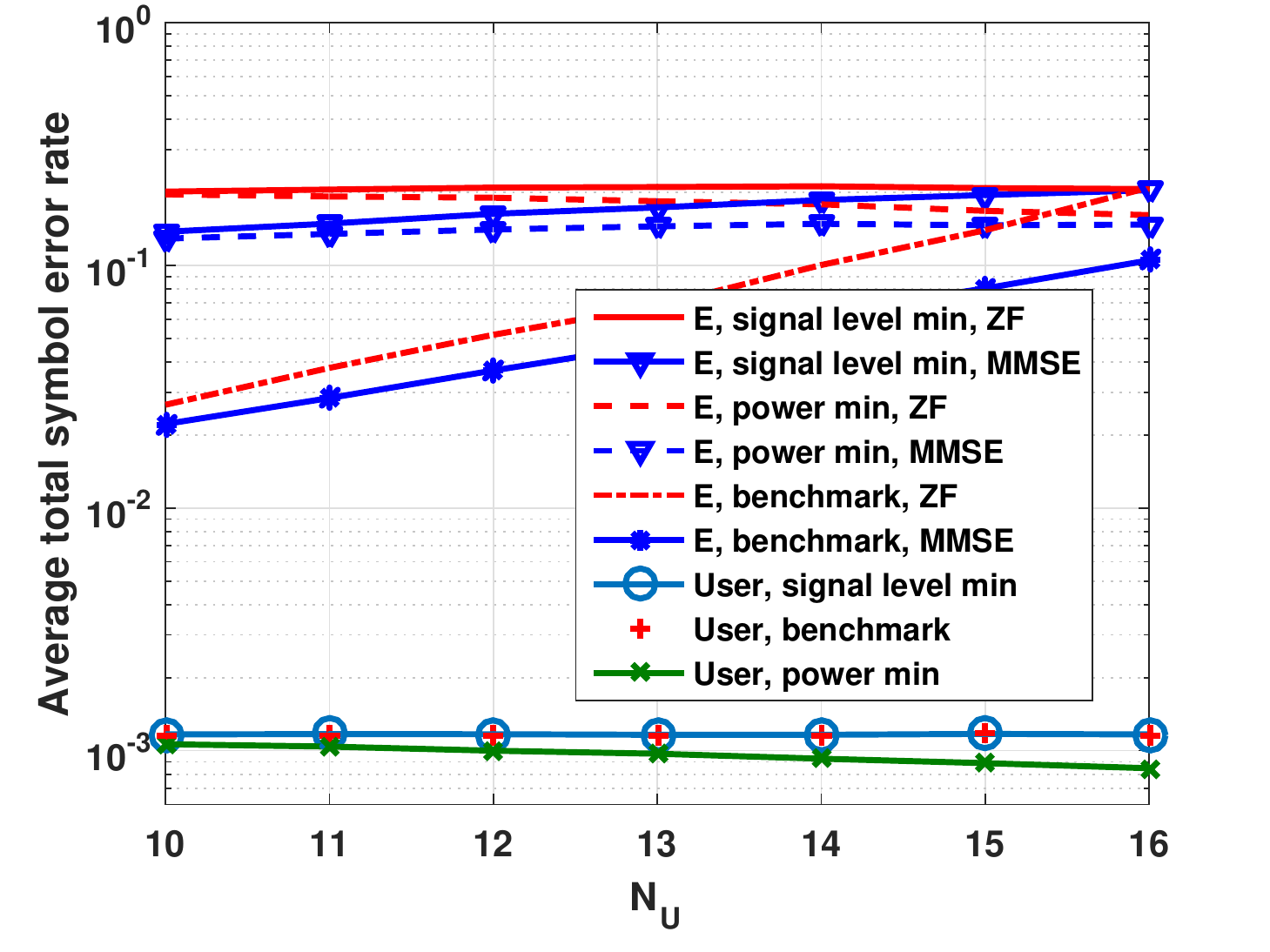}
	\caption{{\rc Average SER versus $N_{U}$ for our designed precoders and the benchmark scheme when $N_t=16$, $N_e=18$, $\gamma=15.56$~dB, and $\beta^2=15.56$~dB.}}
	\label{fig:ser:ant:user1}
\end{figure}
\begin{figure}[]
	\centering
	\includegraphics[width=9cm]{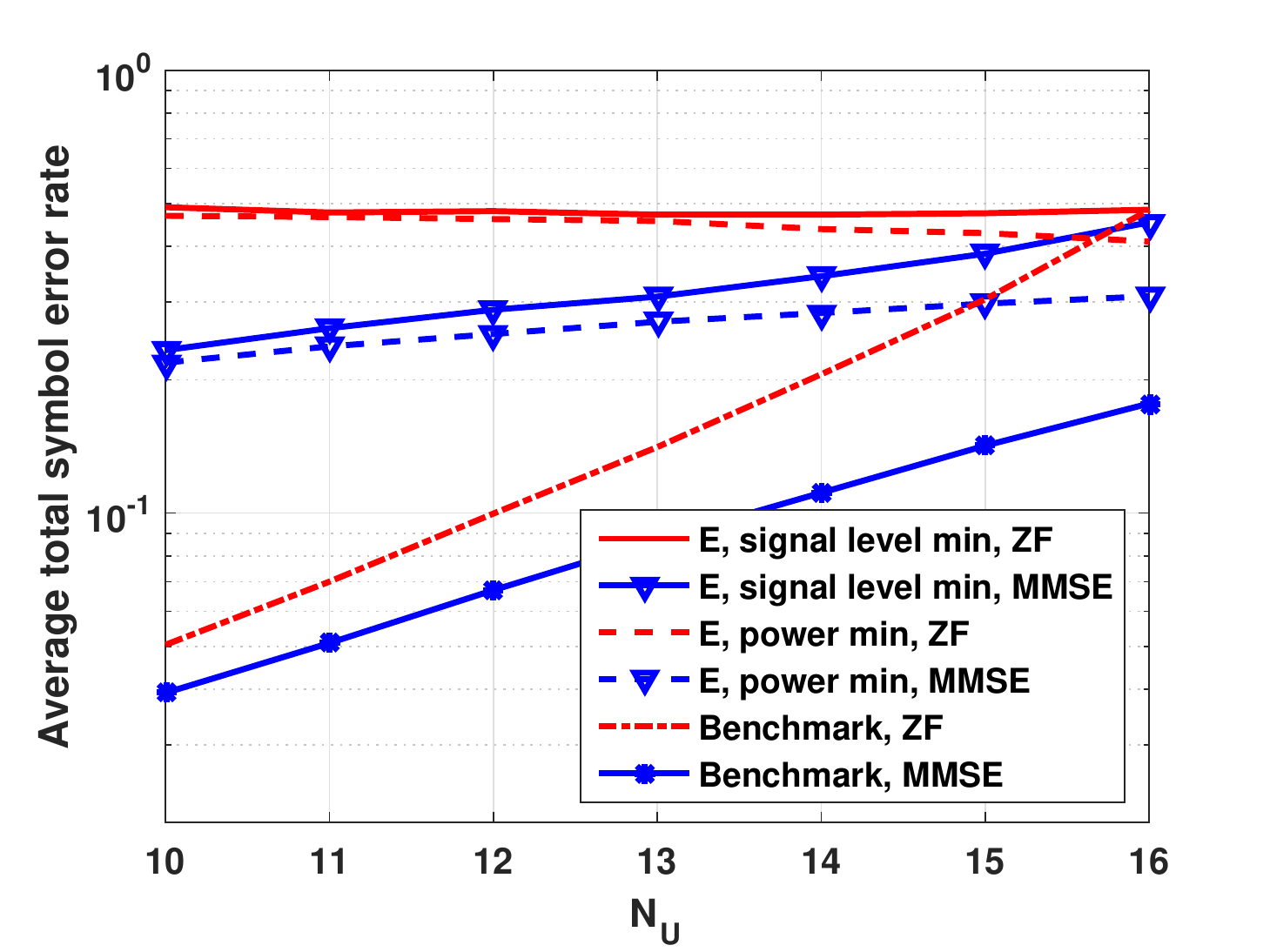}
	\caption{Average SER versus $N_{U}$ for our designed precoders and the benchmark scheme when $N_t=16$, {\rc $N_e=16$,} $\gamma=15.56$~dB, and $\beta^2=15.56$~dB.}
	\label{fig:ser:ant:user2}
\end{figure}

The next scenario inspects the effect of the required SNR for the received signals, $\gamma$, on $T$'s consumed power and the SER at users and $E$. Fig.~\ref{fig:pow:snr} shows the average consumed power with respect to $\gamma$ for our design and the benchmark scheme. The difference between the power consumed by the power minimization precoder and the other two schemes in low SNRs is more than that of high SNRs. The average total SER at users and the average SER at $E$ with respect to $\gamma$ is shown in Fig.~\ref{fig:ser:snr}. As SNR increases, {\rc the SER imposed on $E$ by our design becomes more than that of the benchmark scheme}. {\rc Furthermore, the performance of ZF and MMSE get closer as the SNR increases.} The difference between the average total SER at the users for power and signal level minimization precoders remains almost constant as $\gamma$ increases.
\begin{figure}[]
	\centering
	\includegraphics[width=9cm]{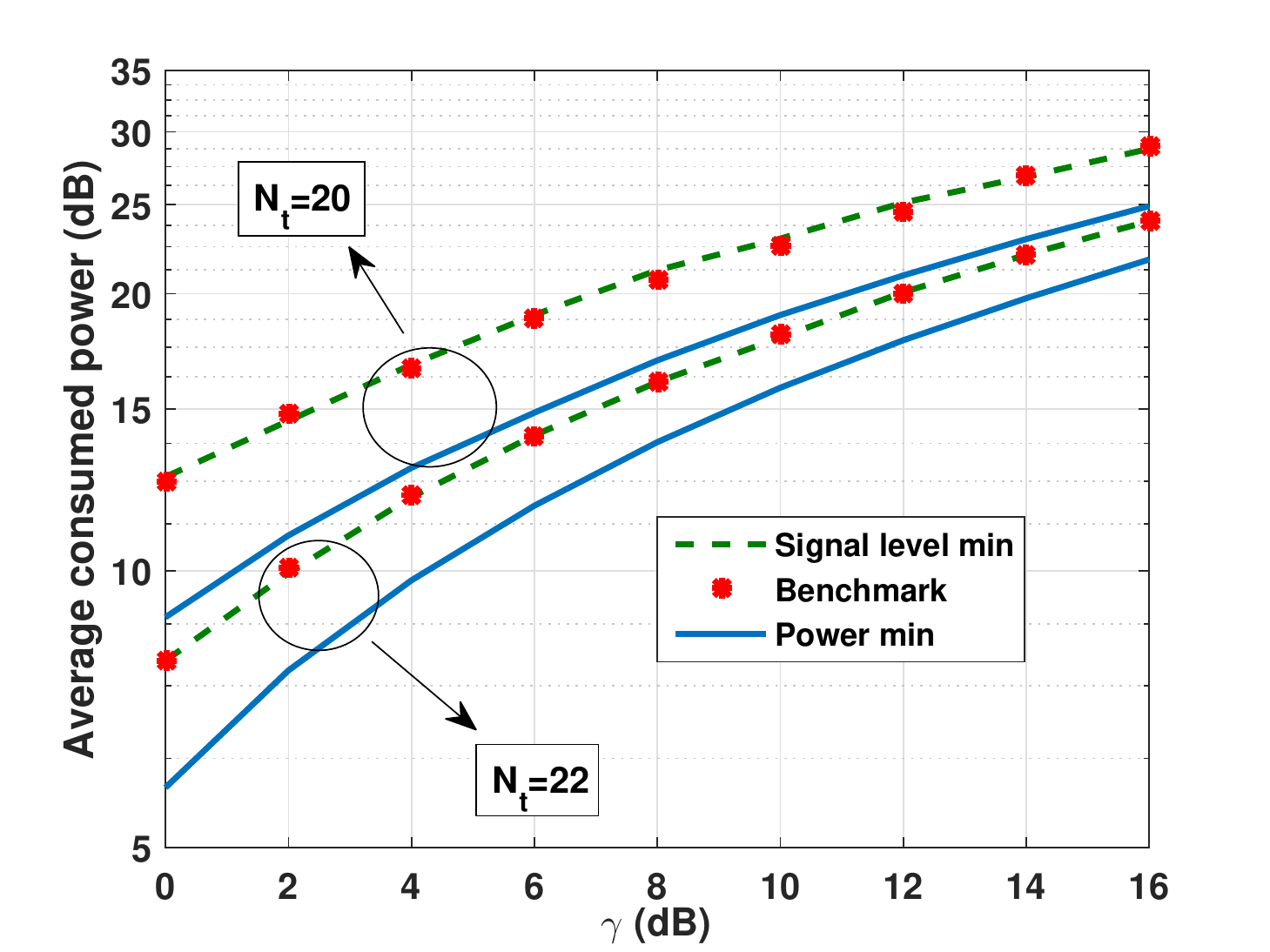}
	\caption{Average consumed power with respect to required SNR for our designed precoders and the benchmark scheme when $N_{U}=19$.}
	\label{fig:pow:snr}
\end{figure}
\begin{figure}[]
	\centering
	\includegraphics[width=9cm]{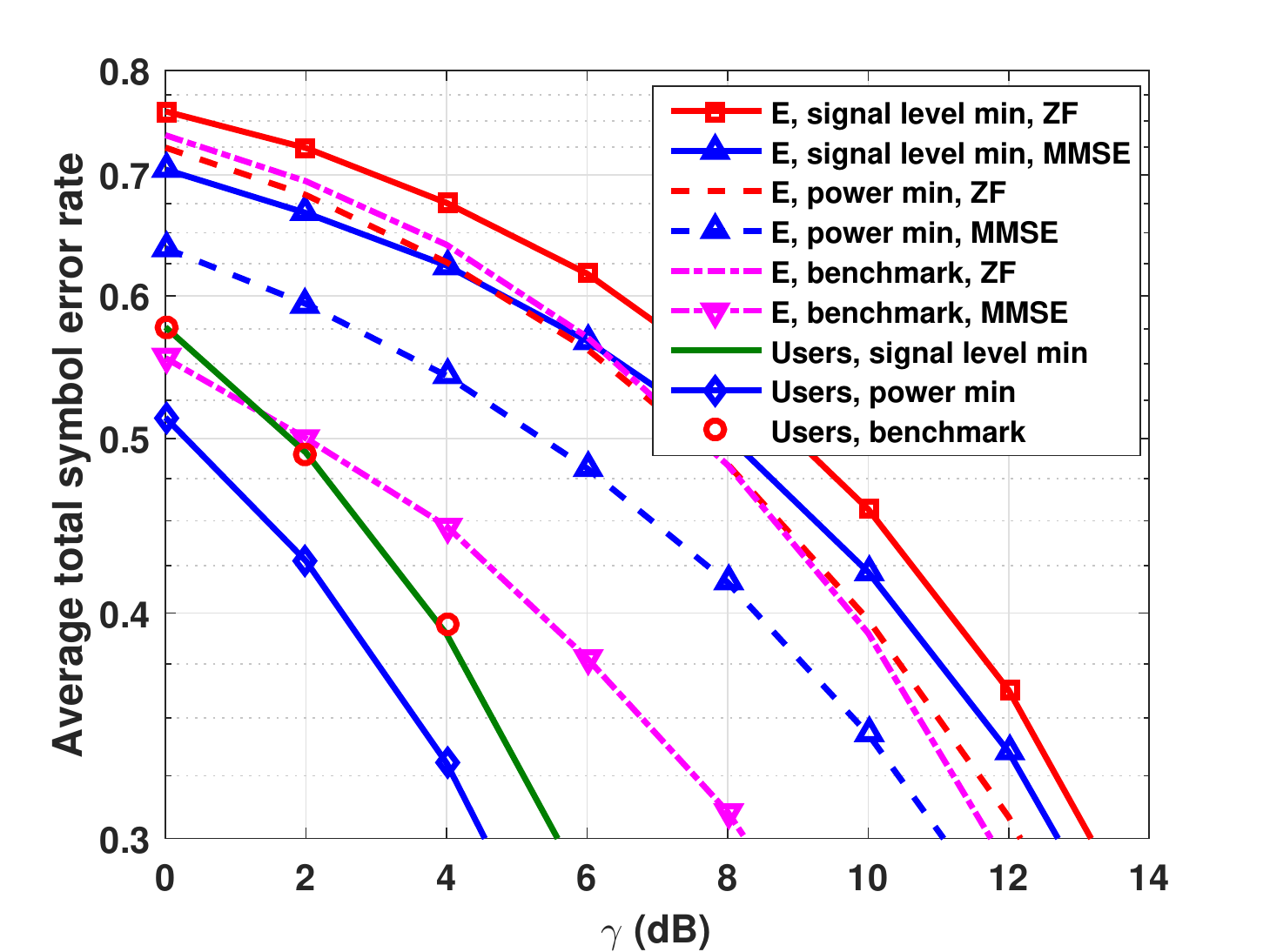}
	\caption{Average SER versus required SNR for our designed precoders and the benchmark scheme when {\rc $N_t=15$,  $N_e=17$ , and $N_{U}=14$}.}
	\label{fig:ser:snr}
\end{figure}
\begin{figure}[]
	\centering
	\includegraphics[width=9cm]{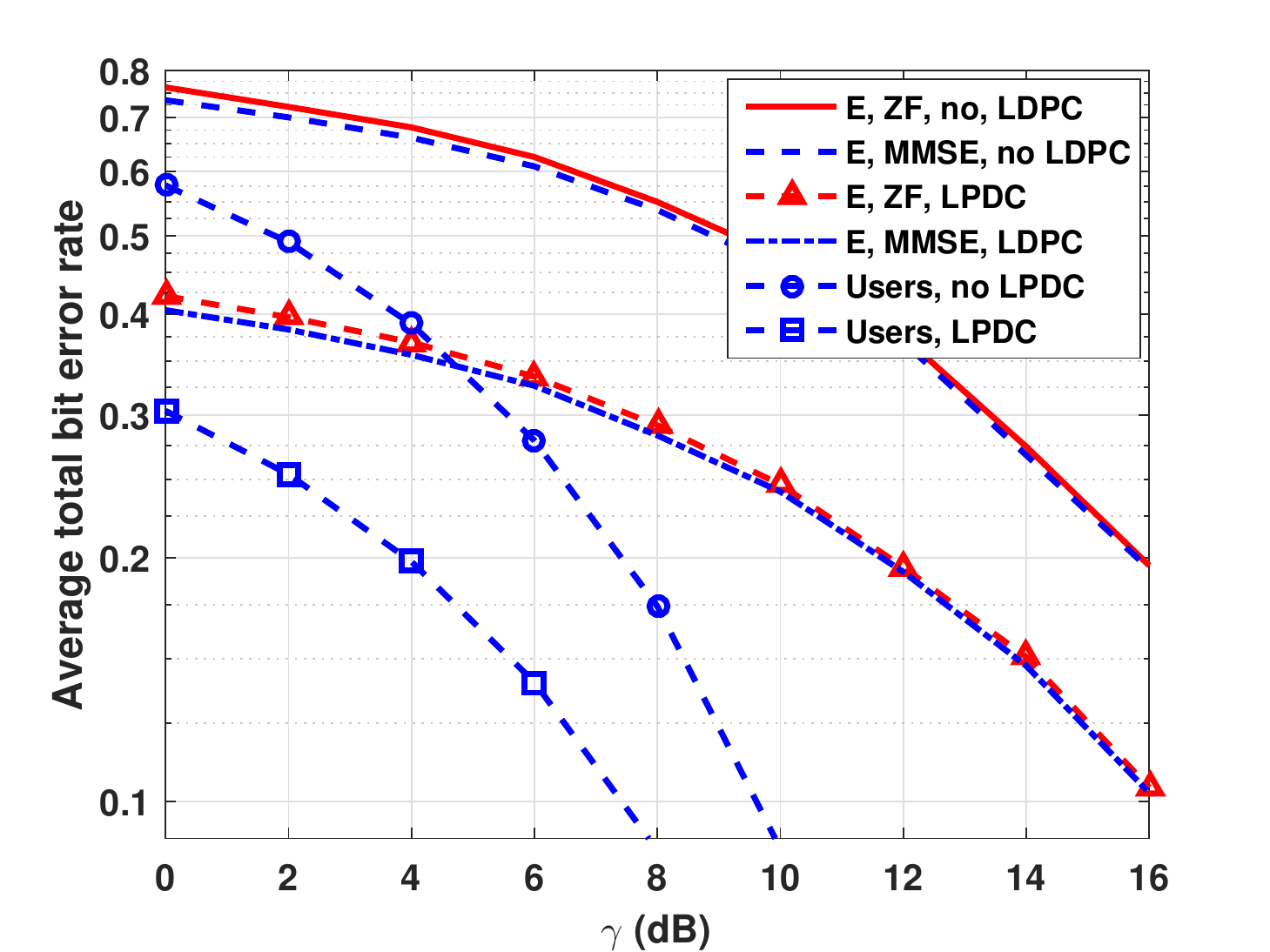}
	\caption{Average BER versus required SNR for {\rc the signal level min precoder when $N_t=10$, $N_e=11$, and $N_U=10$ with the code rate $5/6$.}}
	\label{fig:ser:snr:LDPC}
\end{figure}
\begin{figure}[]
	\centering
	\includegraphics[width=9cm]{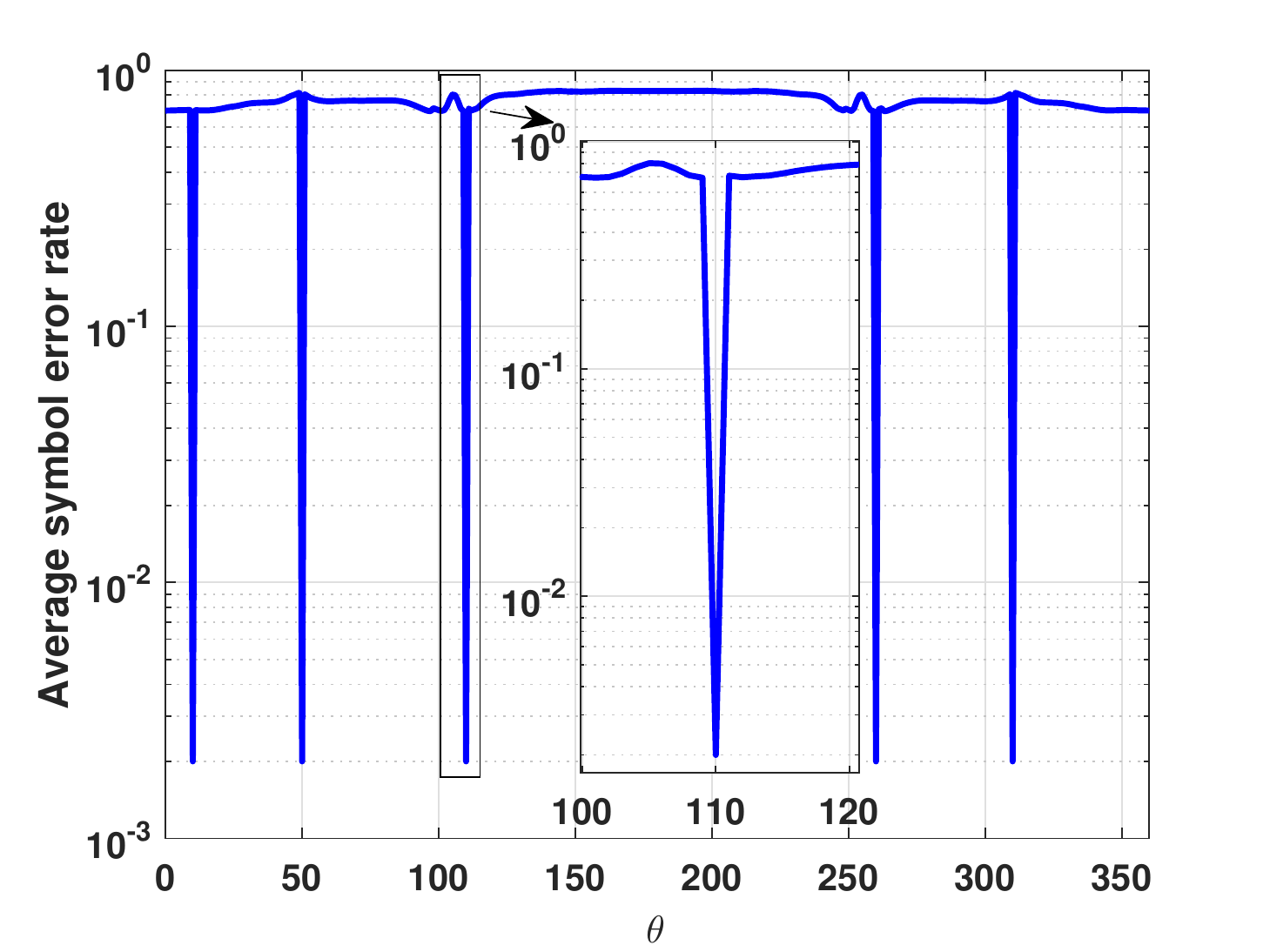}
	\caption{\rc {Average SER versus the location in degrees the signal level min precoder when $N_t=5$ and $T$ communicates with five single-antenna users, i.e., $N_1=N_2=N_3=N_4=N_5=1$.}}
	\label{fig:ULA}
\end{figure}
\begin{figure}[]
	\centering
	\includegraphics[width=9cm]{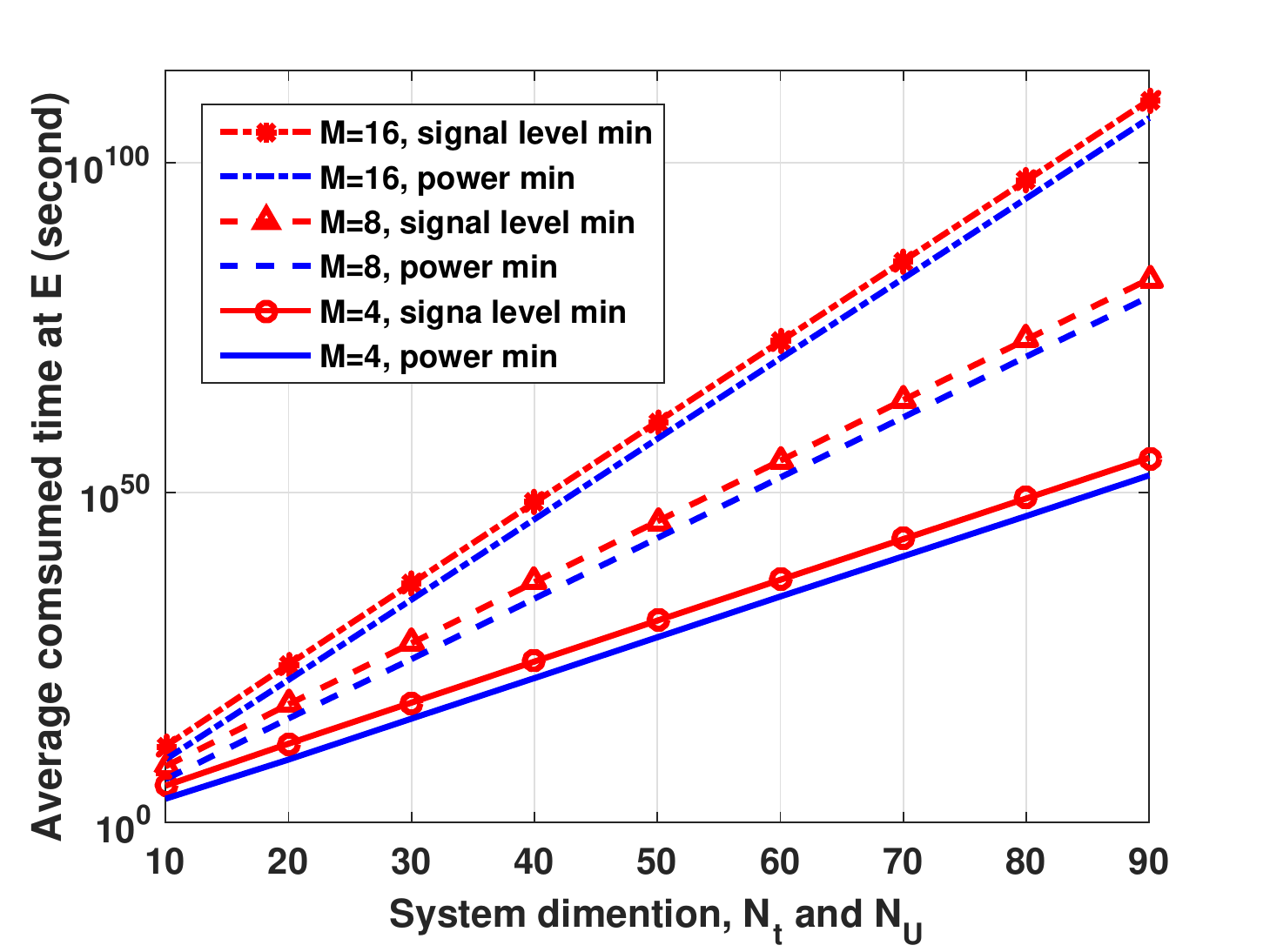}
	\caption{\rc Average consumed time at the eavesdropper with respect to the system dimension when using the proposed precoders with different modulation orders, $N_e=N_t$.}
	\label{fig:eve:time}
\end{figure}
The effect of low-density parity-check (LDPC) codes on the average total bit error rate (BER) at the users and the average BER at $E$ is shown in Fig.~\ref{fig:ser:snr:LDPC} when the signal level minimization precoder is used for the case $N_e \ge N_t$. 

{\rc Next, similar to~\cite{Daly:2009}, we consider a LOS channel and use a uniform linear array (ULA). In this scenario, five single-antenna users are located on the circumference of a circle with radius $4$ m in the angles $10^{\circ}, 50^{\circ}, 110^{\circ}, 260^{\circ}, 310^{\circ}$. The SER with respect to direction of transmission, $\theta$, is shown in Fig.~\ref{fig:ULA}. As we see, the SER sharply decreases to $2 \times 10^{-3}$ from $0.6$ in the locations that users are present.}     

{\rc In the last scenario, we quantify the required time at $E$ to perform brute-force method mentioned in Section~\ref{subsec:bru:for} over all the possible communicated symbols between the transmitter and the receiver. The average brute-force time at $E$ for the proposed precoders using an ordinary computer is shown in Fig.~\ref{fig:eve:time} for different modulation orders. As we see, increasing the system dimension or the modulation order increases the brute-force consumed time enormously.}
\section{Conclusions}
\label{sec:con}
We used the directional modulation technology and followed a signal processing approach to enhance the security over multiuser MIMO channels in the presence of a multi-antenna eavesdropper. {\rc We studied the feasibility of different MIMO receiving algorithms at the eavesdropper and showed that the eavesdropper is able to use the ZF and MMSE algorithms to estimate the users' symbols. The legitimate users can directly decode the received signal via the conventional detectors, e.g., ML, while the results show that the usage of ZF or MMSE causes much more SER at the eavesdropper compared to the users.} In addition, we derived the necessary condition for the feasibility of the optimal precoder for the directional modulation. We proposed an iterative algorithm and non-negative least squares formulation to reduce the design time of the optimal precoders. The results showed that in most of the cases, our designed directional modulation precoders impose a considerable amount of SER on the eavesdropper compared to the conventional precoding. This is due to the fact that our precoders depend on both the CSI knowledge and the symbols while the conventional precoder only depends on the CSI knowledge and the eavesdropper can calculate it. The simulations showed that regardless of the number of antennas, the signal level minimization precoder keeps the SER at the eavesdropper on the maximum value, and it consumes the same power as the power minimization precoder when the difference between the number of transmit and receive antennas is above a specific value. 
In addition, the numerical examples showed that both the power and signal level minimization precoders outperform the benchmark scheme in terms of the power consumption and/or the imposed SER at the eavesdropper.   

\end{document}